\title{The power of the Binary Value Principle}
\author{Yaroslav Alekseev\thanks{Steklov Institute of Mathematics at St.~Petersburg, and Technion. Supported by Lady Davys Fellowship.} 
\ and 
Edward A. Hirsch\thanks{Technion. Partially supported by the European Union’s Horizon
2020 research and innovation programme under grant agreement No 802020-ERC-HARMONIC.}}
\date{}
\definecolor{mycyan}{rgb}{0.0117647,0.533333,0.647059}
\newtheorem{definition}{Definition}
\newtheorem*{note}{Note}
\newtheorem{theorem}{Theorem}[section]
\newtheorem{lemma}{Lemma}[section]
\newtheorem{corollary}{Corollary}[section]
\newcommand{\car}{{\rm CARRY}}
\newcommand{\cari}[1]{\ensuremath{\car_{#1}}}
\newcommand{\add}{{\rm ADD}}
\newcommand{\addi}[1]{\ensuremath{\add_{#1}}}
\newcommand{\addv}{\ensuremath{\overline \add}}
\newcommand{\prd}{{\rm PROD}}
\newcommand{\prdv}{\ensuremath{\overline \prd}}
\newcommand{\prdvp}{\ensuremath{\overline \prd_+}}
\newcommand{\iabs}{{\rm ABS}}
\newcommand{\iabsv}{\ensuremath{\overline \iabs}}
\newcommand{\val}{{\rm VAL}}
\newcommand{\sgn}{{\rm SIGN}}
\newcommand{\arit}[1]{\ensuremath{{\rm arit\!}\left(#1\right)}}
\newcommand{\vy}{\ensuremath{\overline y}}
\newcommand{\vz}{\ensuremath{\overline z}}
\newcommand{\ve}{\ensuremath{{\mathbf e}}}
\newcommand{\vm}{\ensuremath{\overline m}}
\newcommand{\vb}{\ensuremath{\overline b}}
\newcommand{\va}{\ensuremath{\overline a}}
\newcommand{\vx}{\ensuremath{\overline x}}
\newcommand{\vr}{\ensuremath{\overline r}}
\newcommand\poly{\hbox{\textup{poly}}}
\newcommand{\zbit}{{\rm BIT}}
\newcommand{\biti}[1]{\ensuremath{\zbit_{#1}}}
\newcommand{\bitv}{\ensuremath{\overline \zbit}}
\newcommand{\PC}{\ensuremath{\mathsf{PC}}}
\newcommand{\LS}{\ensuremath{\mathsf{LS}}}
\newcommand{\IPS}{\ensuremath{\mathsf{IPS}}}
\newcommand{\CPS}{\ensuremath{\mathsf{CPS}}}
\newcommand{\extpc}{\ensuremath{\mathsf{Ext}\textrm{-}\mathsf{PC}}}
\newcommand{\ebvp}{\ensuremath{\mathsf{eBVP}}}
\newcommand{\extps}{\ensuremath{\mathsf{Ext}\textrm{-}\mathsf{LS}_{+,*}}}
\newcommand{\extpsZ}{\ensuremath{\mathsf{Ext}\textrm{-}\mathsf{LS}_{+,*,\mathbb{Z}}}}
\newcommand{\extpcsurd}{\ensuremath{\mathsf{Ext}\textrm{-}\mathsf{PC}^\surd}}
\newcommand{\extpcsurdz}{\ensuremath{\mathsf{Ext}\textrm{-}\mathsf{PC}_{\mathbb{Z}}^\surd}}
\newcommand{\extpcsq}{\extpcsurd}
\newcommand{\extpcZ}{\ensuremath{\mathsf{Ext}\textrm{-}\mathsf{PC}_{\mathbb{Z}}}}
\newcommand{\extpcsqZ}{\ensuremath{\mathsf{Ext}\textrm{-}\mathsf{PC}_{\mathbb{Z}}^\surd}}
\newcommand{\extpcsqQ}{\ensuremath{\mathsf{Ext}\textrm{-}\mathsf{PC}_{\mathbb{Q}}^\surd}}
\newcommand{\pcsq}{\ensuremath{\mathsf{PC}^\surd}}
\begin{document}
\maketitle

\begin{abstract}
The (extended) Binary Value Principle ({\ebvp}, the equation $\sum_{i=1}^n x_i2^{i-1} = -k$ for $k>0$ and in the presence of $x^2_i=x_i$) has received a lot of attention recently,
several lower bounds have been proved for it \cite{AGHT19_new,alekseev2020lower,PT_bvp_journal}.
Also it has been shown \cite{AGHT19_new} that the probabilistically verifiable
Ideal Proof System ({\IPS}) \cite{GP14} together with {\ebvp} polynomially simulates a similar semialgebraic proof system.
In this paper we consider Polynomial Calculus with the algebraic version of Tseitin's extension rule ({\extpc}).
Contrary to {\IPS}, this is a Cook--Reckhow proof system.
We show that in this context {\ebvp} still allows to simulate similar semialgebraic
systems. We also prove that it allows to simulate the Square Root Rule \cite{GH03},
which is in sharp contrast with the result of \cite{alekseev2020lower}
that shows an exponential lower bound on the size of {\extpc} derivations of the Binary Value Principle from its square.
On the other hand, we demonstrate that {\ebvp} probably does not help in proving
exponential lower bounds for Boolean formulas: we show that
an {\extpc} (even with the Square Root Rule) derivation of any unsatisfiable Boolean formula in CNF from {\ebvp} must be of exponential size.
\end{abstract}

\tableofcontents

\section{Introduction}
Tseitin's extension rule \cite{Tse68} is a powerful concept that turns even very weak propositional proof systems into strong ones: it allows to introduce new variables for arbitrary formulas (it is enough to do this for the disjunction and the negation). In particular, it turns Resolution (a rather weak system for which superpolynomial lower bounds are known since \cite{Tse68}) into the powerful Extended Frege system \cite{CR79} (a strong system for which we do not even know good enough candidates for superpolynomial lower bounds). 

Surprisingly, in the context of algebraic proof systems an exponential lower bound for a system that uses Tseitin's rule was proved recently \cite{alekseev2020lower}.
This system, Extended Polynomial Calculus (or {\extpc}), combines the algebraic version of the extension rule (so that we can introduce new variables for polynomials) with the Polynomial Calculus ({\PC}) \cite{CEI96} system. While it has more power because it allows to talk about polynomials over any algebraically closed field (or, in the Boolean setting, even just over a ring, such as $\mathbb{Z}$), the exponential lower bound has been proved for a system of polynomial equations that does not correspond to any Boolean formula (in particular, a formula in conjunctive normal form, as in Resolution). This system, called ``the Binary Value Principle'', is the equation $\sum_{i=1}^nx_i2^{i-1}+1=0$ along with the ``Boolean axioms'' $x_i^2-x_i=0$ for every variable $x_i$. It has also been used for proving other exponential lower bounds \cite{AGHT19_new,PT_bvp_journal} and (as the Extended Binary Value Principle, {\ebvp}) for demonstrating a polynomial simulation of polynomial inequalities by polynomial equations \cite{AGHT19_new} for generalized proof systems that require polynomial identity testing for the verification (the algebraic system is the Ideal Proof System, {\IPS}, of \cite{GP14}).
Note that polynomial inequalities are considered to be much more powerful than polynomial equations: for example, no exponential size lower bound is known even for the simplest proof system {\LS} (motivated by the optimization procedure by L\'aszl\'o Lov\'asz and Alexander Schrijver, see \cite{Pud99} and \cite{GHP02}).

\emph{Our results.} In this paper, we consider three questions about {\extpc} and {\ebvp}, and prove three results:
\begin{enumerate}
\item How powerful is {\extpc}? 
We prove (Theorem~\ref{th:ext-pc-sim}) that together with {\ebvp} it polynomially simulates a similar system that uses inequalities (namely, {\extps}, which is {\LS} with extension variables, squares, and multiplication). This brings the result of \cite{AGHT19_new} down to conventional proof systems from proof systems that use polynomial identity testing for proof verification. It is interesting how far we can weaken the proof systems to keep such simulation (it becomes harder and harder when we weaken them to formulas or polynomials written as sums of monomials).
\item Grigoriev and Hirsch \cite{GH03} introduced the square root rule that allows to conclude $f=0$ from $f^2=0$. It would be needed for the implicational completeness of {\PC} in the non-Boolean case. It is not needed at all in the Boolean context, however, it could shorten the proofs. It is impossible to simulate it polynomially in {\extpc} (\cite{alekseev2020lower} proves an exponential bit-size-of-coefficients lower bound on derivations of $\sum x_i2^{i-1}+1=0$ from $(\sum x_i2^{i-1}+1)^2=0$) and {\PC} (\cite{PTT21} proves a linear degree lower bound on derivations of $\sum x_i+1=0$ from $(\sum x_i+1)^2=0$). We prove (Theorem~\ref{th:sq-root}) that in the case of {\extpc} derivations the square root rule can be polynomially simulated using {\ebvp}.
\item Is it possible to use lower bounds for {\ebvp} for proving lower bounds for formulas in conjunctive normal form? One could imagine deriving the translation of an unsatisfiable formula in conjunctive normal form (using the extension variables) from {\ebvp} and concluding a lower bound for a formula in CNF. We prove an exponential lower bound (Theorem~\ref{lower bound q CNF}) on the size of derivations of such formulas from {\ebvp}, showing an obstacle to this approach.
\end{enumerate}

\emph{Our methods.}
The divisibility method suggested in \cite{AGHT19_new,alekseev2020lower} allows to prove lower bounds on the size of algebraic proofs
by analysing the scalars appearing in them. The simplest application of this method substitutes the input variables 
by the binary representations of all possible integers, and shows that the constant in the final contradiction
in the proof over the integers divides all of them (if the system allows it).
In this paper we further develop this method: we prove lower bounds for the derivation of a translation of an unsatisfiable formula in CNF (and not just a contradiction), so there is no single constant at the end. We show an exponential lower bound over the integers
by counting the primes that divide  the multiplicative constants in the derivation of every clause and Boolean equation.
The lower bound for rationals follows using the translation of \cite{alekseev2020lower}.

In order to show polynomial simulations we use the general approach suggested in \cite{AGHT19_new}: 
to use bit arithmetic for proving facts about (semi)algebraic proofs. 
However, {\IPS} \cite{GP14} considered in that paper uses polynomial identity testing for proof verification,
and thus allows to switch between the circuit representations of polynomials at no cost.
Our setting is different: we need to simulate everything using the extension rule.
Therefore, in order to simulate inequalities
we derive gradually the facts that the values produced by bit arithmetic
equal the values of polynomials in the original proof, and that these values are nonnegative.
We also need to define the circuit representation, in particular for the extension variables,
in order to reason about {\extps} proofs.
A somewhat similar approach works for the simulation of the square root rule;
however, we need to derive that all individual bits of the zero are zeroes,
and then take the square root.

\emph{The organization of the paper.} The paper is organized as follows. Three preliminary sections:
\begin{description}
\item[Section~\ref{sec:prelim}.] We define the proof systems and the measures of complexity we use.
\item[Section~\ref{sec:circuit}.] We define the circuit representation of polynomials appearing in an {\extpc} proof.
\item[Section~\ref{sec:bit}.] We define the bit arithmetic translation of circuits and prove useful facts about it.
\end{description}
Sections describing the three results:
\begin{description}
\item[Section~\ref{sec:ps}.] We prove the {\extpc+\ebvp} simulation of inequalities.
\item[Section~\ref{sec:sq}.] We prove that {\ebvp} allows {\extpc} to polynomially simulate the square root rule.
\item[Section~\ref{sec:cnf}.] We prove an exponential lower bound on the size of derivations of formulas in conjunctive normal form from {\ebvp}.
\end{description}
Eventually, in \textbf{Section~\ref{sec:oq}} we describe directions for further research.

\section{Preliminaries}\label{sec:prelim}
In this paper we work with polynomials over integers or rationals. 
We define \emph{the size of a polynomial} roughly as the total length of the bit representation of its coefficients.
Formally, let $f$ be an arbitrary integer or rational polynomial in variables $\{x_1, \ldots, x_n\}$. 
\begin{itemize}
    \item If $f \in \mathbb{Z}[x_1, \ldots, x_n]$ then $Size(f) = \sum (\lceil \log |a_i| \rceil + 1)$, where $a_i$ are the coefficients of $f$.
    \item If $f \in \mathbb{Q}[x_1, \ldots, x_n]$ then $Size(f) = \sum (\lceil \log |a_i| \rceil + \lceil \log|b_i| \rceil + 1)$, where $a_i \in \mathbb{Z}$, $b_i \in \mathbb{N}$ and $\frac{a_i}{b_i}$ are the coefficients of $f$.
\end{itemize}

We also use \emph{algebraic circuits}. 
Formally, an algebraic circuit is a dag whose vertices (gates) compute binary operations (addition and multiplication), thus gates have in-degree two; the inputs (or variables) and constants (nodes computing integers or rationals) 
are nodes of in-degree zero. Every gate of an algebraic circuit computes a polynomial in the input variables in a natural way; we sometimes identify a gate with the circuit consisting of all the nodes on which the gate depends (thus this gate is the output gate of such circuit). 

The size of the circuit is the number of its gates plus the sum of the bit sizes of all constants.
We will also be interested in the \emph{syntactic length} of an algebraic circuit, defined for circuits over $\mathbb{Z}$:
it is roughly a trivial upper bound on the number of bits of an integer computed by the circuit.
The definition essentially follows \cite{AGHT19_new}, augmenting it with the multiplication.

\begin{definition}[syntactic length of algebraic circuit]
Consider the gates of an algebraic circuit $G_1, \ldots, G_k$ in topological order. We define the syntactic length inductively:
\begin{itemize}
    \item If $G_i$ is an integer constant, then the syntactic length of $G_i$ is $\lceil\log(|G_i|)\rceil$.
    \item If $G_i = G_j + G_k$,  the syntactic length of $G_j$ is $t$, and the syntactic length of $G_k$ is $s$, then the syntactic length of $G_i$ equals $\max(s, t) + 1$.
    \item If $G_i = G_j \cdot G_k$,  the syntactic length of $G_j$ is $t$, and the syntactic length of $G_k$ is $s$, then the syntactic length of $G_i$ equals $s + t + 3$.
\end{itemize}
\begin{note}1. In the latter case the actual number of bits would be $s+t$; we state $s+t+3$ because this is how it is computed in our implementation of the integer multiplication in Sect.~\ref{sec:bit} --- however, it does not change much asymptotically, the resulting length changes at most polynomially.

2. Note that the circuit size cannot exceed its syntactic length.
\end{note}
\end{definition}
\subsection{Algebraic proof systems}
In what follows, $R$ denotes $\mathbb{Q}$ or $\mathbb{Z}$.

\begin{definition}[Polynomial Calculus, \cite{CEI96}]\label{def:PC}
Let $\Gamma = \{p_1, \ldots, p_m\} \subset R[x_1, \ldots, x_n]$ be a set of polynomials in variables $\{x_1, \ldots, x_n\}$ over $R$ such that the system of equations $p_1 = 0, \ldots, p_m = 0$ has no solution. A Polynomial Calculus ($\PC_R$) refutation of $\Gamma$ is a sequence of polynomials $r_1, \ldots, r_s$ where $r_s = const \neq 0$  and for every $l$ in $\{1, \ldots, s\}$, either $r_l \in \Gamma$ or $r_l$ is obtained through one of the following derivation rules for $j, k < l$:
\begin{itemize}
    \item $r_l = \alpha r_j + \beta r_k$, where $\alpha, \beta \in R$,
    \item $r_l = x_i r_k$.
\end{itemize}
The size of the refutation is $\sum_{l = 1}^s Size(r_l)$. The degree of the refutation is $\max_l deg(r_l)$.
\end{definition}

\begin{note}
1. In this paper we consider $\mathbb{Q}$ or $\mathbb{Z}$ as $R$ in $\PC_R$ above or ${\extpc}_R$ below. For both of these rings, we consider  the Boolean case, where axioms $x_i^2 - x_i = 0$ are present for every variable $x_i$, and for this case our proof systems are complete. 

2. Note that in the case $R=\mathbb{Q}$ one can assume $r_s=1$, while in the case $R=\mathbb{Z}$ an arbitrary nonzero constant is needed to maintain the completeness.
\end{note}

Tseitin's extension rule allows to introduce new variables for arbitrary formulas.
We use an algebraic version of this rule that allows to denote any polynomial by a new variable \cite{alekseev2020lower}.

\begin{definition}[Extended Polynomial Calculus, {\extpc}]\label{def:extpc}
Let $\Gamma = \{p_1, \ldots, p_m\} \subset R[x_1, \ldots, x_n]$ be a set of polynomials in variables $\{x_1, \ldots, x_n\}$ over $R$ such that the system of equations $p_1 = 0, \ldots, p_m = 0$ has no solution. An ${\extpc}_R$ refutation of $\Gamma$ is a Polynomial Calculus refutation of a set 
\begin{multline*}
\Gamma' = \{p_1, \ldots, p_m, y_1 - q_1(x_1, \ldots, x_n), y_2 - q_2(x_1, \ldots, x_n, y_1), \ldots, \\ y_m - q_m(x_1, \ldots, x_n, y_1, \ldots, y_{m - 1})\}
\end{multline*}
where $q_i \in R[\bar{x}, y_1, \ldots, y_{i - 1}]$ are arbitrary polynomials. 

We omit $R$ from the notation of $\PC_{R}$ or ${\extpc}_R$ when it is clear from the context.
The size of the {\extpc} refutation is equal to the size of the Polynomial Calculus refutation of $\Gamma'$.
\end{definition}

The square root rule \cite{GH03} allows to conclude that $f=0$ from $f^2=0$. We can consider it in the context of both {\PC} and {\extpc}.

\begin{definition}[{\pcsq}, {\extpcsq}]\label{def:PCS}
The proofs in {\pcsq}, {\extpcsq} follow
Definitions~\ref{def:PC},~\ref{def:extpc}
but allow one more derivation rule in terms of~Def.~\ref{def:PC}: 
\begin{itemize}
    \item derive $r_l$, if $r_l^2 = r_k$
\end{itemize}
(derive a polynomial if its square has been already derived).
\end{definition}

\begin{note}
If $R$ is a domain and $p^2 = 0$ for some $p \in R[\bar{x}]$, then $p = 0$.
\end{note}

The extended Binary Value Principle ({\ebvp}) says that that the (nonnegative) integer
value of a binary vector cannot be negative.
In order to use this fact in the proof, we need to specify
that such a polynomial can be replaced by $1$
(in particular, if {\ebvp} is present without a multiplier, it produces the contradiction $1=0$).

\begin{definition}[${\extpc}+{\ebvp}$]
${\extpc}+{\ebvp}$ operates exactly the same derivation rules as {\extpc} with one more rule:
\begin{itemize}
\item derive $r_l=g$ if for some polynomials $g, f_1,\ldots, f_t$ and integer constant $M>0$
we have derived the polynomial $r_k = g \cdot (M + f_1 + 2 f_2 + \ldots + 2^n f_t)$ along with polynomials
$
r_{k_1}=f_1^2 - f_1, \ldots, r_{k_t}=f_t^2 - f_t.
$
\end{itemize}
\end{definition}
\begin{note}
We can define ${\extpcsurd}+{\ebvp}$ the same way.
\end{note}

\subsection{A semialgebraic proof system}

We will consider the following proof system that can be viewed as a generalization of the {\LS} proof system \cite{Pud99} by the algebraic extension rule. Note that we could move the introduction of new variables to the beginning of the proof as we did in the definition of {\extpc}, however, it does not matter.

\begin{definition}[{\extps}]
Let $\Gamma = \{p_1, \ldots, p_m\} \subset R[x_1, \ldots, x_n]$ be a set of polynomials in variables $\{x_1, \ldots, x_n\}$ over $R$ such that the system of equations $p_1 \ge 0, \ldots, p_m \ge 0$ has no solution. An {\extps} refutation of $\Gamma$ is a sequence of polynomial inequalities $r_1 \ge 0, \ldots, r_m \ge 0$ where $r_m = -M$ ($M>0$ is an integer constant) and each inequality $r_l$ is obtained through one of the following inference rules:
\begin{itemize}
    \item $r_l = p_i$ for some $i$, or $r_l = x_i$, or $r_l = 1 - x_i$, or $r_l = x_i^2 - x_i$, or $r_l=x_i-x_i^2$, or $r_k = z^2$ for any variable $z$.  
    \item $r_l = r_i \cdot r_j$ or $r_l = r_i + r_j$ for $i, j < l$.
    (Note that we can infer $1$ as $x_i+(1-x_i)$, thus we can multiply by any positive constant.)

    \item If variable $y$ did not occur in polynomials $r_1, \ldots, r_{l - 1}$, then we can derive a pair of polynomials $r_l = y - f$, $r_{l + 1} = f - y$, where $f$ is one of the basic operations (addition, multiplication, copying) applied to variables not including $y$, and constants. 
\end{itemize}
Note that the newly introduced variables are not necessarily Boolean.
The size of the refutation is $\sum_{l = 1}^m Size(r_l)$. The degree of the refutation is $\max_l deg(r_l)$.

\end{definition}
\begin{note}
1. Once again, in the case $R=\mathbb{Q}$ we could assume $M=1$, while we need an arbitrary positive constant for $R=\mathbb{Z}$ in order to maintain completeness.

2. Note that while the definition of {\extps} is written in a slightly different manner compared to {\extpc}, it is not difficult to see that {\extps} polynomially simulates {\extpc} (in particular, conversion of equations to inequalities and of ideal inference to cone inference can be done similarly to \cite[Sect. 4.1.1 of the Technical Report version]{AGHT19_new}).
\end{note}


\section{Circuit and equational representations}\label{sec:circuit}
We will represent the polynomials of the {\extps} derivation as circuits in the input variables. In order to do this, we define circuit representations of axioms and extension variables.

\begin{definition}[Circuit representation: axioms]
For a polynomial $f\in\mathbb{Z}[\vx]$ appearing in the axiom $f\ge0$, we consider its arbitrary reasonable circuit representation
$$
Z_{f,1} = h_{f,1}(\vx), \ldots, Z_{f,s} = h_{f,s}(z_1, \ldots, z_{s-1}, \vx)
$$
where $h_{f,i}$ is one of the basic operations (addition, multiplication), a constant or one of the initial variables.
We denote the resulting circuit $Z_{f,s}$ by $Z_f$.
\end{definition}

We next define circuit representation for the extension variables. 

\begin{definition}[Circuit representation: extension variables]\label{def:circuitext}
Suppose we have a sequence of extension variables $y_1, \ldots, y_k$ introduced in some derivation by axioms $y_j = g_j(\vx, y_1, \ldots, y_{j - 1})$ (where $1 < j \le k$). We can define their values by algebraic circuits computed in a natural way (the axioms are substituted into each other): define the sequence of circuits $Y_1(\vx), \ldots, Y_k(\vx)$ by
\begin{itemize}
    \item $Y_1(\vx) = g_1(\vx)$,
    \item for each $1 < j \le k$, $Y_j(\vx) = g_j(\vx, Y_1(\vx), \ldots, Y_{j -1}(\vx))$.
\end{itemize}
We call $Y_i$ the circuit representation of the extension variable $y_i$.
\end{definition}

With the circuit representation of the extension variables and axioms, we can define the circuit representation of an {\extps} proof.

\begin{definition}[Circuit representation: {\extps} proof]
Given an {\extps} refutation $p_1 \ge 0, \ldots, -M = p_m \ge 0$ of a system in variables $x_i$, 
we construct the circuit representation $P_1,\ldots,P_m$
of its polynomials inductively:
\begin{itemize}
\item If $p_l$ is an axiom, $P_l$ is the circuit representation of this axiom.
\item If $p_l = x_i$, or $p_l = 1 - x_i$, then $P_l$ is the simple circuit computing $p_l$.
\item If $p_l = z^2$ for a variable $z$, then $P_l = Q\cdot Q$, where $Q$ is the circuit representation of $y$ (note that typically, $z$ is an extension variable).
\item If $p_l$ is obtained using a binary operation $\circ$ (addition or multiplication)
from $p_i$ and $p_j$, we put $P_l=P_i\circ P_j$.
\item If $p_l$ introduces a new variable, or it is the Boolean axiom $x_i^2-x_i$ (or $x_i-x_i^2$), we put $P_l=0$.
\end{itemize}
Note that the axioms and the extension variables appear in $P_i$'s
as subcircuits, and that the inputs of $P_i$'s correspond to the original variables of the system.
\end{definition}
\begin{definition}[equational representation]
Any algebraic circuit can be represented by equations (one equation per gate). More precisely, if we have gates $G_1, \ldots, G_m$ in  topological order, then we can consider variables $\gamma_1, \ldots, \gamma_m$ with the corresponding set of polynomial equations:
\begin{itemize}
    \item If $G_i = x_i$ or $1-x_i$ for some input variable, then corresponding polynomial equation for the $\gamma_i$ would be $\gamma_i = x_j$ or $\gamma_i = 1-x_i$.
    \item If $G_i = G_k \circ G_\ell$, then the corresponding polynomial equation for the $\gamma_i$ would be $\gamma_i = \gamma_k \circ \gamma_\ell$. 
\end{itemize}
We refer to this set of equations as the equational representation.
\end{definition}

The following lemma is used in the simulation of {\extps}.

\begin{lemma}\label{lem:circuit}
Consider the circuit and equational representations of an {\extps} proof $p_1\ge0,\ldots,p_t\ge0$.
Consider $P_i$ corresponding to the equational representation with the output variable $\pi_i$.
Then there is a polynomial-size (in the size of the original proof) {\extpc} derivation of $\pi_i=p_i$ using only the Boolean axioms
and the definitions of extension variables of the {\extps} proof.
The extension variables needed in the {\extpc} derivation
are those appearing in the equational representation.
\end{lemma}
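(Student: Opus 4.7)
The plan is to derive the claim by a double induction. The outer induction is on the line $l$ of the {\extps} proof (to establish $\pi_l = p_l$), supplemented by a parallel induction on the extension variables $y_j$ of the {\extps} proof (to establish $\gamma_{Y_j} = y_j$, where $\gamma_{Y_j}$ denotes the output variable of the equational representation of $Y_j$). Inside each case I run a sub-induction along the topological order of the new gates introduced into the relevant circuit, deriving equations of the form $\gamma_G = \hat q_G$, where $\hat q_G$ is the subpolynomial (in $\vx$ and earlier {\extps} extension variables) computed at gate $G$.

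For the extension variables $y_j$ I induct on $j$. Since every $g_j$ is a single basic operation on $\vx, y_1, \ldots, y_{j-1}$, the circuit $Y_j$ adds only one new gate on top of the shared subcircuits for $Y_1, \ldots, Y_{j-1}$. Combining the IH $\gamma_{Y_k}=y_k$ ($k<j$) with the new extension axiom $\gamma_{Y_j}=\gamma_{Y_a}\circ\gamma_{Y_b}$ from the equational representation and the {\extps} extension axiom $y_j=g_j$ yields $\gamma_{Y_j}=y_j$. For the {\extps} lines I induct on $l$. When $p_l$ is a simple axiom ($x_j$, $1-x_j$, $x_j^2-x_j$, etc.) or an initial polynomial inequality, $P_l$ is a circuit of size polynomial in $|p_l|$ and the gate-by-gate sub-induction yields $\gamma_G=\hat q_G$ at every gate, in particular $\pi_l=p_l$. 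When $p_l=p_i\circ p_j$ is derived, $P_l$ has a single new top gate, and the outer IH $\pi_i=p_i$, $\pi_j=p_j$ combined with the new axiom $\pi_l=\pi_i\circ\pi_j$ gives $\pi_l=p_l$. When $p_l=z^2$, we have $P_l=Q\cdot Q$ where $Q$ is the circuit for $z$ (either $Y_{j_0}$ for an extension $z=y_{j_0}$ or a trivial input circuit for an original $z$), so the parallel induction gives $\gamma_Q=z$ and hence $\pi_l=z^2=p_l$. When $p_l=y-f$ (or $f-y$) introduces a new variable, $P_l=0$, and the target $\pi_l=p_l$ is literally an {\extps} extension axiom.

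The main obstacle is the multiplication case of the gate-level sub-induction: to combine $\gamma_a-q_a=0$ and $\gamma_b-q_b=0$ into $\gamma_a\gamma_b=q_aq_b$, I first multiply $\gamma_a-q_a=0$ by the single variable $\gamma_b$ (one {\extpc} multiplication step), and then multiply $\gamma_b-q_b=0$ by the polynomial $q_a$; the latter is implemented by expanding $q_a$ into monomials, applying the variable-multiplication rule variable-by-variable for each monomial, and summing the results, finally adding to the extension axiom $\gamma_G=\gamma_a\gamma_b$. The cost of this step is $O(|q_a|\cdot|q_b|)$, which stays polynomial because in a reasonable circuit representation every intermediate $\hat q_G$ has size bounded by the size of the ambient $p_l$ (or $g_j$). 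Summing over all gates of all circuits in the equational representation, whose total number is polynomial in the {\extps} proof size, gives the claimed polynomial-size {\extpc} derivation.
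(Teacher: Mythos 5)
Your proposal is correct and follows essentially the same route as the paper: a preliminary induction showing that the equational-representation variables for the extension-variable circuits $Y_j$ (and for the axiom circuits) coincide with the corresponding {\extps} quantities, followed by the same case analysis on how each line $p_l$ arises (axiom, variable/negation, square, binary operation, new-variable or Boolean axiom with $P_l=0$). The only difference is that you spell out the monomial-by-monomial simulation of multiplying a derived equation by a polynomial, which the paper leaves implicit.
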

\begin{proof}
First of all, note that if we consider any gate $Y_i$ from the circuit representation of the extension variables, then there is a polynomial-size {\extpc} proof of the equality
    $
    y_i = \upsilon_i,
    $
    where $\upsilon_i$ is the variable corresponding to the gate $Y_i$ in the equational representation. It follows by induction on the construction of the circuit representation (the sets of equations for the variables $\upsilon_i$ and $y_i$ are exactly the same).
    Similarly, for any variable $Z_{f}$ representing axiom $f$, there is a polynomial-size {\extpc} proof of the equality
    $
    f = \phi_f,
    $
    where variable $\phi_f$ corresponds to the gate $Z_{f}$. 
    
    Now we prove the statement of the lemma.
    We proceed by induction on the steps of the {\extps} proof:
    \begin{enumerate}
        \item If $p_l$ is an axiom, it follows from the discussion above.
        \item Recall that if $p_l$ introduces a new variable $y$, or it is the Boolean axiom $x_i^2-x_i$ (or $x_i-x_i^2$), we put $P_l=0$,
        that is, $\pi_l=0$ by definition. On the other hand, $p_l$ is an axiom for our {\extpc} proof, that is, $p_l=0$ is derived in a single step. Therefore $\pi_l = 0 = p_l$.
        \item If $p_l = x_i$, or $p_l = 1 - x_i$, then $P_l$ is the simple circuit computing $p_l$. Thus, it is also easy to prove that $\pi_l = p_l$.
        \item If $p_l = y^2$ for a variable $y$, then $P_l = Y\cdot Y$, where $Y$ is the circuit representation of the extension variable $y$ (if $y$ is the input variable, the situation is trivial). By the discussion above there is a polynomial-size derivation of 
        $
        \upsilon_i = y_i.
        $
        Then using the equation $\pi_l = \upsilon_i \cdot \upsilon_i$, we get that 
        $
        \pi_l = \upsilon_i \cdot \upsilon_i = y_i^2 = p_l.
        $
        \item If $p_l$ is obtained using a binary operation $\circ$ (addition or multiplication)
        from $p_i$ and $p_j$, we have $P_l=P_i\circ P_j$. Then the corresponding equation in the equational representation
        $
        \pi_l = \pi_i \circ \pi_j,
        $
        and we can use the induction assumption to  derive
        $
        \pi_l = \pi_i \circ \pi_j = p_i \circ p_j = p_l.
        $
    \end{enumerate}

\end{proof}

In order to simulate the square root derivation rule we need to consider a circuit representation of an arbitrary polynomial in extension variables, since a derivation in $\extpcsq$, unlike derivations in $\extps$, does not correspond to an algebraic circuit (algebraic circuits do not use square root gates).

\begin{definition}[Circuit representation: polynomials]\label{def:circuitpoly}
Consider a polynomial $g \in \mathbb{Z}[\vx, \vy]$, where $\vx$ are original variables and $\vy$ are variables introduced by the extension rule.  Def.~\ref{def:circuitext} defines the circuit representation $Y_1, \ldots, Y_m$ for the variables $y_1, \ldots, y_m$. Then we can consider any reasonable circuit  $G'_1, \ldots, G'_t$ computing the polynomial $g$ given variables $x_1, \ldots, x_n$, variables $y_1, \ldots, y_m$, and the constants. Substituting the subcircuits $Y_1, \ldots, Y_m$ in place of the inputs $y_1,\ldots,y_m$ of $G'_i$'s, we get the \emph{circuit representation} $G_1,\ldots,G_l$ of $g$.

The syntactic length of the polynomial $g$ is defined as the syntactic length of the circuit $G_1,\ldots,G_l$.
\end{definition}

The same proof works for a simplified version of Lemma~\ref{lem:circuit}:
\begin{lemma}\label{note:circuit}
 Consider any polynomial $g$ over the extension variables $y_1, \ldots, y_k$ and the original Boolean variables $x_1, \ldots, x_n$, and consider any reasonable circuit representation  $G'_1, \ldots, G'_t$ of $g$. Then we can substitute the subcircuits $Y_1, \ldots, Y_m$ in place of the inputs $y_1,\ldots,y_m$ of $G'_i$'s, and get the circuit representation $G_1,\ldots,G_l$ of $g$.

Then, if we consider an equational representation $\pi_1, \ldots, \pi_l$ of the circuit $G_1, \ldots, G_l$, then there is a polynomial-size (in the size of $g$) {\extpc} derivation of the equation
$$
g = \pi_l.
$$
\end{lemma}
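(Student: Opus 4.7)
The plan is to essentially recycle the induction used in Lemma~\ref{lem:circuit}, but instead of iterating over the lines of an \extps\ proof we iterate over the gates of the single circuit computing $g$. Two inductive steps will be needed: one to relate each extension-variable circuit $Y_i$ to its equational-representation output, and one to climb up the gates of the circuit for $g$.

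First I would show, by induction on $i$, that there is a polynomial-size \extpc\ derivation of $y_i = \upsilon_i$, where $\upsilon_i$ is the variable in the equational representation corresponding to the output gate of $Y_i$. The base case uses the definition axiom $y_1 - g_1(\vx) = 0$ together with the equational representation of $Y_1$ (in which every internal gate is defined by a single addition/multiplication equation). The inductive step is the same: plug the already-derived equalities $y_j = \upsilon_j$ (for $j<i$) into the equational representation of $Y_i$ and use the extension axiom $y_i = g_i(\vx, y_1, \ldots, y_{i-1})$. Each such equality is obtained by a constant number of Polynomial Calculus steps per gate, so the total size is polynomial in the size of the circuits $Y_1, \ldots, Y_k$, hence in the size of $g$.

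Next I would walk up the circuit $G'_1, \ldots, G'_t$ for $g$ (before substitution) and show by induction on the topological order that each gate equals the corresponding variable $\pi$ in the equational representation of the full circuit $G_1, \ldots, G_l$. The base cases are: constants and input variables $x_j$ are trivial, and an input $y_i$ of $G'$ is handled by the previously derived $y_i = \upsilon_i$ (since $\upsilon_i$ is exactly the variable that the equational representation assigns to the substituted subcircuit $Y_i$). The inductive step for a gate $G'_s = G'_a \circ G'_b$ uses the equation $\pi_s = \pi_a \circ \pi_b$ from the equational representation together with the inductive hypothesis that $\pi_a$ equals the polynomial computed at $G'_a$ and similarly for $G'_b$; combining these gives the required equality. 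At the output gate this yields $g = \pi_l$.

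The main thing to check is just the size accounting: each inductive step contributes $O(1)$ \extpc\ lines of size bounded by the sizes of the polynomials in the equational representation, and the total number of gates in the combined circuit $G_1, \ldots, G_l$ is linear in the size of $g$, so the derivation is of polynomial size. There is no real obstacle beyond this bookkeeping; the content is literally the same as cases (1), (3), and (5) in the proof of Lemma~\ref{lem:circuit}, with no \extps-specific cases (axioms, Boolean axioms, or squaring) to consider.
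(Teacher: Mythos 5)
Your proposal is correct and follows essentially the same route as the paper: the paper proves this lemma simply by remarking that the proof of Lemma~\ref{lem:circuit} carries over, namely first deriving $y_i = \upsilon_i$ for each extension variable by induction on the construction of the $Y_i$'s, and then climbing the gates of the circuit for $g$ exactly as in your second induction. The size accounting you give (constantly many \extpc{} lines per gate) is also how the paper bounds it.
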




\section{Explicit BIT definition and basic lemmas}\label{sec:bit}
In our {\extpc} simulations in Sections~\ref{sec:ps} and~\ref{sec:sq}, we argue about individual bits
of the values of the polynomials appearing in the {\extps} proof.
In this section we construct the circuits corresponding to these bits and prove auxilary statements about
our constructions.
We basically follow \cite{AGHT19_new} (Theorem 6.1 in the Technical Report version),
however, there are important differences:
\begin{enumerate}
    \item In the case of {\extpc} proofs, the circuits are used in the meta-language only.
    In the actual derivation,
    the bits are represented by extension variables
    defined through other extension variables, etc.
    (essentially computing the circuit value).
    \item Contrary to \cite{AGHT19_new}, we cannot magically switch
    between different representations of polynomials,
    every step of the derivation has to be done syntactically.
\end{enumerate}

The integers are represented in two's complement form (see the definition of $\val$ below). We use the following notation:
\begin{description}


\item[$\biti{i}(F)$:] if $F(\vx)$ is a circuit in the variables $\vx$, then $\biti{i}(F)$ is a new variable defined through other extension variables (and $\vx$) that computes the $i$-th bit of the integer computed by $F$ as a function of the input variables $\vx$, where the variables $\vx$ range over 0-1 values. The integer is represented in the two's complement form, that is, its highest bit is the sign bit.

\item[$\sgn(F)$] is used to denote this sign bit.


\item[$\bitv(F)$:] a collection of new variables that compute the bit vector of $F$. Note that $\bitv(F)$ also includes $\sgn(F)$.






\item[$\val(\vz)$:] the evaluation polynomial that converts bit encoding of an integer $\vz$ in two's complement representation to its integer value. Given $z_0,\ldots, z_{k-1}$,
$$
\val(\vz)=\sum_{i=0}^{k-2} 2^i\cdot z_i\,-2^{k-1}\cdot z_{k-1}.
$$

\end{description}

We construct the representation of  \biti{i}$(F)$ by induction on the size of $F$. 

\subsection{Proof strategy for the simulation}
Our plan for the simulation of {\extps} in Sect.~\ref{sec:ps} is as follows:
\begin{itemize}
    \item Suppose we have an {\extps} refutation $p_1(\vx, \vy) \ge 0, \ldots, p_m(\vx, \vy) \ge 0$, where $p_m = -M > 0$. We will consider the circuit representation $P_1, \ldots, P_m$ of polynomials $p_1, \ldots, p_m$ in order to speak about $\bitv(P_i)$,
    and will introduce more extension variables according to the corresponding equational representation of $P_i$'s.
    
    \item We will show by induction that we can derive the following statements in {\extpc}:
    \begin{enumerate}
        \item $\val(\bitv(P_i)) = p_i$.
        \item $\sgn(P_i) = 0$. 
    \end{enumerate}
    Then given the fact that $\val(\bitv(P_m)) = p_m = -M$, where $M \in \mathbb{N}$, and $\sgn(P_i) = 0$, we can apply $\ebvp$ to derive a contradiction in {\extpc}.
\end{itemize}
Before we accomplish this, we need to define BIT (using the definitions for basic arithmetic operation) and prove several useful lemmas about what can we derive in {\extpc} (basic facts about the values, the signs, etc). These will be also useful for the simulation of the square root rule in Sect.~\ref{sec:sq}.

\subsection{Basic arithmetic operations}
We now describe circuit constructions of the basic operations that we will need for the BIT definition.
A formal definition of those arithmetic operations essentially follows the scheme of \cite{AGHT19_new}. 
There is, however, one key difference: while in \cite{AGHT19_new} we defined the operations as circuits, in our context we define them as new variables alongside with their defining (sets of) equations. So all the capitalized notation above corresponds to new extension variables (sometimes with implicit introduction of auxilary extension variables) or vectors of new extension variables.

\begin{definition}[arithmetization operation $\arit{\cdot}$]\label{def:arithmetization}
 For a variable $x_i$, $\arit{x_i}:=x_i$. For the truth values false $\bot$ and true $\top$
we put $\arit{\bot}:=0$ and $\arit{\top}:=1$.
For logical connectives we define
$\arit{A\land B}:=\arit{A}\cdot \arit{B}$,
$\arit{A\lor B}:=1-(1-\arit{A})\cdot(1-\arit{B})$, and for the {\rm XOR} operation we define
$\arit{A\oplus B}:= \arit{A}+\arit{B}-2\cdot\arit{A}\cdot\arit{B}$.
\end{definition}

\begin{definition}[$\cari i$, $\addi i $, \addv]\label{def:carryadd}
When we use an adder for vectors of different size,
we pad the extra bits of the shorter one by its sign bit. Suppose that we have a pair of length-$(k + 1)$ vectors of variables $\vy=(y_0,\dots,y_{k}),\vz=(z_0,\dots,z_{k})$ of the same size. We first  pad the two vectors by a single additional bit $y_{k + 1}=y_{k}$ and $z_{k + 1}=z_{k}$, respectively (this is the  way to deal with  a possible overflow occurring  while adding the two vectors). Define
\begin{eqnarray*}
\cari{i}(\vy,\vz)&:=&
\begin{cases}
(y_{i-1}\land z_{i-1})\lor((y_{i-1}\lor z_{i-1})\land\cari{i-1}(\vy,\vz))
              ,               & i = 1,\ldots,k + 1;\\
              0\,, & i=0\,,
\end{cases}
\end{eqnarray*}
and
$$
\addi{i}(\vy,\vz):=
              y_i\oplus z_i \oplus \cari{i}(\vy,\vz)\,,                ~i = 0,\ldots,k.
$$
Finally, define
\[
\addv(\vy,\vz):=(\addi{t}(\vy,\vz),\cdots,\addi{0}(\vy,\vz))
\] (that is, $\addv$ is a multi-output circuit with $k+2$ output bits).
\end{definition}

\begin{definition}[absolute value operation \iabsv]\label{def:abs}
Let  $\vx$ be a $(k + 1)$-bit vector representing an integer in two's complement. Let  $s$ be  its sign bit, and let $\vm=\ve(s)$ be the $(k + 1)$-bit vector all of whose bits are $s$. Define $\iabsv(\vx)$ as the multi-output circuit that outputs $k+2$ bits as follows (where $\oplus$ here is bit-wise {\rm XOR}): $$\iabsv(\vx):=\addv(\vx,\vm)\oplus\vm.$$
\end{definition}

\begin{definition}[product of two nonnegative numbers in binary \prdvp]\label{def:prodp}
Let $\va$ be an $(r + 1)$-bit integer and $\vb$ be a $(k + 1)$-bit integer where the sign bit of both $\va,\vb$ is zero.
We define $k + 1$ iterations $i=0,\ldots,k$;
the result of the $i$-th iteration is defined as the $(r+i + 1)$-length vector $\overline s_i=s_{i,r+i}s_{i,r+i-1}\cdots s_{i,0}$, where
\begin{align*}
s_{ij}&:=a_{j-i}\land b_i, &\text{ for $i \le j\le r+i$},\\
s_{ij}&:=0 &\text{for $0\le j<i$.}
\end{align*}
(Note that we  use the sign bits $a_{k},b_{r}$ in this process although we assume it is zero; this is done in order to preserve uniformity with other parts of the construction.)
The  product of a $(k + 1)$-bit and an $(r + 1)$-bit integers is defined  as the sequential addition of all the results in all iterations:
\[
\prdvp(\va,\vb):=\addv\left(\overline s_{k},\addv\left(\overline s_{k-1},\ldots,\addv\left(\overline s_1,\overline s_0\right)\right)\ldots\right).
\]
The number of output bits of $\prdvp$ is formally $k + r + 2$ including the sign bit.
\end{definition}

\begin{definition}[product of two numbers in binary \prdv]\label{def:prod}
Let $\vy$ be an $(r + 1)$-bit integer and $\vz$ be a $(k + 1)$-bit integer in two's complement notation.
Define the product of $\vy$ and $\vz$ by first multiplying the absolute values of the two numbers
and then applying the corresponding sign bit:
\[
\prdv(\vy,\vz):= \addv\left(\prdvp\left(\iabsv(\vy),\iabsv(\vz)\right)\oplus\vm,s\right),
\]
where $s=y_{r}\oplus z_{k}$ and $\vm=\ve(s)$, with $y_{r},z_{k}$ the sign bits of $\vy,\vz$ as bit vectors in the two's complement notation, respectively.

Note that the number of bits that $\prdv$ outputs is  $k + r +5$:
given a $(k + 1)$-bit number, its $\iabs$ is of size $k+2$ (including the zero sign bit),
the nonnegative product $\prdvp$ of $\iabs(\vx)$ and $\iabs(\vy)$ has size $(k+2) + (r + 2)$,
bitwise XOR does not change the length, and adding $s$ augments the result by one more bit.
\end{definition}

\subsection{Definition of BIT}
Following \cite{AGHT19_new} we define the bit representation
of the values of polynomials computed by algebraic circuits.
In doing this, we construct another circuit. We identify its nodes
with new variables that will appear in our ${\extpc}+{\ebvp}$ proof,
and the defining equation for these variables are exactly the operations computed by the gates of the new circuit.
Note that the inputs of this circuit are the same as the inputs of the original circuit.

\begin{definition}[BIT]

Let $G_1 = f_1(\vx), G_2 = f_2(\vx, G_1), \ldots, G_m = f_m(\vx, G_1, \ldots, G_{m - 1})$ be a topological order of the gates of an algebraic circuit over variables $\vx$.

For each $G_r$ we define $\biti{i}(G_r)$ to be a \textbf{new extension variable} with the corresponding polynomial equation so that $\biti{i}(G_r)$ computes the $i$-th bit of $G_r$:

\noindent\textbf{Case 1:} $G_r = x_j$ for an input $x_j$. Then, $\biti{0}(G_r):=x_j$, $\biti{1}(G_r):=0$ (in this case there are just two bits).

\noindent\textbf{Case 2:} $G_r = \alpha$, for $\alpha\in \mathbb{Z}$. Then, $\biti{i}(G_r)$ is defined to be the $i$-th bit of $\alpha$ in two's complement notation.

\noindent\textbf{Case 3:} $G_r = G_k + G_l$. Then $\bitv(G_r) = \addv(\bitv(G_k),\bitv(G_l))$, and $\biti i(y_r)$ is defined to be the $i$-th bit of $\bitv(y_r)$.


\noindent\textbf{Case 4:} $G_r=G_k \cdot G_l$. Then $\bitv(G_r):=\prdv(\bitv (G_k),\bitv(G_l))$, and $\biti i(G_r)$ is defined to be the $i$-th bit of $\bitv(G_r)$.

Recall that in the latter two cases the shorter number is padded to match the length of the longer number by copying the sign bit before applying $\addv$ or $\prdv$.
\end{definition}

\subsection{The binary value lemma}
We now show a short proof of the fact that the BIT$(G)$ circuit that we constructed computes the same binary value
as the original circuit $G$. Moreover, it can be compactly proved in {\extpc} for the equational representation of BIT$(G)$.


\begin{lemma}[binary value lemma]\label{lem:binary-value-lemma}
Let $y_1 = f_1(\vx), y_2 = f_2(\vx, y_1), \ldots, y_m = f_m(\vx, y_1, \ldots, y_{m - 1})$ be the equational representation of the algebraic circuit \[G_1(\vx) = f_1(\vx), \ldots, G_m = f_m(\vx, G_1(\vx), \ldots, G_{m - 1}(\vx))\] over the variables $\vx=\{x_1,\dots,x_n\}$, and let t be the syntactic length of $G_1, \ldots, G_m$.

Then, there is an {\extpc} proof (using only the Boolean axioms and the equations of the BIT encoding) of $$y_i=\val(\bitv(G_i))$$ of size $\poly(t)$ for each $1 \le i \le m$.
\end{lemma}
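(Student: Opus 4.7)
The plan is to proceed by induction on the gates $G_1,\ldots,G_m$ in topological order, reducing the claim at each gate to two auxiliary arithmetic correctness lemmas --- for $\addv$ and for $\prdv$ --- which are themselves derived in {\extpc}. The first preparatory step is to establish, for every extension variable $b$ appearing in any of the $\bitv(G_i)$ constructions, the Boolean axiom $b^2=b$. This is straightforward: each such bit is either an input $x_j$ (Boolean by assumption), a literal constant $0$ or $1$, or the arithmetization of a logical connective applied to previously derived Boolean bits, and a short {\extpc} calculation shows that $\arit{A\land B}$, $\arit{A\lor B}$, and $\arit{A\oplus B}$ are Boolean whenever $A$ and $B$ are.

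The heart of the argument is the two correctness lemmas. For addition I would show $\val(\addv(\vy,\vz))=\val(\vy)+\val(\vz)$ in {\extpc}, assuming every input bit is Boolean. The derivation obtains the per-bit carry invariant $\addi{i}(\vy,\vz)+2\cari{i+1}(\vy,\vz)=y_i+z_i+\cari{i}(\vy,\vz)$ from the defining equations in constant size (using the Boolean axioms), multiplies each instance by the appropriate $2^i$, and telescopes the resulting sum; the two's-complement semantics at the top bit is automatic because of the explicit sign-bit padding $y_{k+1}=y_k$, $z_{k+1}=z_k$ prescribed in Definition~\ref{def:carryadd}. For multiplication I would prove $\val(\prdv(\vy,\vz))=\val(\vy)\cdot\val(\vz)$ by first establishing the nonnegative version $\val(\prdvp(\va,\vb))=\val(\va)\cdot\val(\vb)$ by induction on the iteration count (each step reduces to one application of the addition lemma to the partial product $\overline s_i$, which has $\val$ equal to $b_i\cdot 2^i\cdot\val(\va)$), and then handling the sign correction through two auxiliary identities provable bit-by-bit in {\extpc}: $\val(\iabsv(\vx))=|\val(\vx)|$, and the standard two's-complement negation identity expressing $-\val(\vx)$ as $\val(\addv(\vx\oplus\ve(1),1))$.

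With these lemmas in hand, the main induction is immediate. For $G_r=x_j$ the definition gives $\val(\bitv(G_r))=x_j-0=x_j=y_r$; for a constant $G_r=\alpha$ the $\val$ sum collapses to $\alpha$ by direct computation; for $G_r=G_k+G_\ell$ the addition lemma combined with the inductive hypotheses and the defining equation $y_r=y_k+y_\ell$ yields $y_r=\val(\addv(\bitv(G_k),\bitv(G_\ell)))=\val(\bitv(G_r))$, and the product case is analogous via the multiplication lemma.

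The main obstacle will be the size bookkeeping. Every step of the auxiliary lemmas must be performed syntactically on actual extension variables whose defining equations are fixed by the BIT construction, and the multiplication lemma nests $O(t)$ invocations of the addition lemma inside $\prdvp$, each over bit vectors of length $O(t)$. I would verify that the per-bit carry derivation is of constant size, that each invocation of the addition lemma is of size $\poly(t)$, and that the nested applications inside the multiplication lemma together with the outer induction over the $m\le t$ gates still give a total proof of size $\poly(t)$, using that the syntactic length $t$ simultaneously bounds the bit-lengths of all intermediate values and the gate count of the original circuit.
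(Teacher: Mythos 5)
Your proposal is correct and follows essentially the same route as the paper: induction on the gates in topological order, reducing each step to {\extpc}-derivable correctness statements $\val(\addv(\vy,\vz))=\val(\vy)+\val(\vz)$ and $\val(\prdv(\vy,\vz))=\val(\vy)\cdot\val(\vz)$. The paper itself only sketches this induction and defers the arithmetic lemmas to the corresponding construction in \cite{AGHT19_new}, so your write-up (carry invariant, telescoping, $\prdvp$ by iteration, sign handling via $\iabsv$, and the preliminary Booleanness of all bit variables) is simply a more explicit rendering of the same argument.
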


\begin{proof}
For the proof we refer to the similar lemma from \cite{AGHT19_new}. That paper talks about another system, {\IPS},
which incorporates polynomial identity testing for free. However, the proof of this lemma is syntactic and does not use
polynomial identity testing.
We will briefly describe the structure of the proof. 

The proof proceed by induction. On each induction step we assume that we have already constructed {\extpc} proofs for the equations 
$$
y_1 = \val(\bitv(G_1)), \ldots, y_r = \val(\bitv(G_r))
$$
and construct the proof of the equation $y_{r + 1} = \val(\bitv(G_{r + 1}))$. The construction of the {\extpc} proof depends on the way in which the variable $y_{r + 1}$ was introduced. For example, if $y_{r + 1} = y_k \cdot y_l$, then $G_{r + 1}$ is a product gate and $G_{r + 1} = G_k \cdot G_l$. We need to show that 
$$
\val(\prdv(\bitv(G_k), \bitv(G_l))) = \val(\bitv(G_k)) \cdot \val(\bitv(G_l)),
$$ 
which can be done exactly in the same way as in  \cite{AGHT19_new}. 
\end{proof}

\subsection{Useful lemmas about the BIT value}
In this section we describe technical lemmas about individual bits in the bit representation that will be used later in the proof of our simulation.


\begin{lemma}\label{lem:bits-of-zero}
For any vector of variables $r_0, \ldots, r_{k - 1}, r_{k}$, there is a
$\poly(k)$-size ${\extpc}+{\ebvp}$ derivation of 
$$
r_0 = \ldots = r_{k}= 0
$$ from
$$
r_0^2 - r_0 = 0, \ldots, r_{k}^2 - r_{k} = 0 \text{ and } r_0 + 2 r_1 + \ldots + 2^{k - 1} r_{k - 1} - 2^{k} r_{k} = 0.$$
\end{lemma}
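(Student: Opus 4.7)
The plan is to apply $\ebvp$ once per bit, each time massaging the given equation into the premise form $g\cdot(M+f_1+2f_2+\cdots+2^{t-1}f_t)$ with $M>0$ and Boolean $f_i$'s. I treat the sign bit $r_k$ first and then the low-order bits one by one.

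For the sign bit I multiply the given equation by $r_k$ and use the Boolean axiom $r_k^2-r_k=0$ to replace $r_k^2$ by $r_k$. After multiplying by $-1$ the result is $r_k\cdot(2^k-\sum_{i=0}^{k-1}2^i r_i)=0$. Using the integer identity $2^k=1+\sum_{i=0}^{k-1}2^i$, the bracket is syntactically equal to $1+\sum_{i=0}^{k-1}2^i(1-r_i)$. From each axiom $r_i^2-r_i=0$, an $O(1)$-size derivation yields $(1-r_i)^2-(1-r_i)=0$ (the two polynomials are literally equal). The $\ebvp$ rule with $g=r_k$, $M=1$ and $f_{i+1}=1-r_i$ then gives $r_k=0$.

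Having $r_k=0$ in hand, I multiply it by $2^k$ and add to the given equation to obtain $\sum_{i=0}^{k-1}2^i r_i=0$. For each $j\in\{0,\dots,k-1\}$ I multiply this by $r_j$, use $r_j^2=r_j$ to simplify the $r_j^2$ term, and collect to get
\[
r_j\cdot\Bigl(2^j+\sum_{0\le i<k,\,i\ne j}2^i r_i\Bigr)=0.
\]
Inserting the trivially Boolean constant $0$ in the missing $2^j$ slot places this in $\ebvp$ form with $M=2^j>0$, and $\ebvp$ yields $r_j=0$.

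Each of the $k+1$ applications of $\ebvp$ is preceded by $\poly(k)$-size syntactic manipulation (distributivity, addition, multiplication by integers, and a few applications of the Boolean axioms), so the total derivation has size $\poly(k)$. There is no conceptual obstacle; the only thing worth checking carefully is that the rewriting $2^k-\sum 2^i r_i=1+\sum 2^i(1-r_i)$ and the factor-out-$r_j$ step can genuinely be produced as explicit $\poly(k)$-size {\extpc} derivations from the multiplied equation and the Boolean axioms, but this is routine bookkeeping.
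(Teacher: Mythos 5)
Your proof is correct and follows essentially the same route as the paper's: multiply the linear equation by the bit in question, use the Boolean axiom to expose an instance of \ebvp{} (with $M=1$ and $f_i=1-r_i$ for the sign bit, and $M=2^j$ for the lower bits), and apply the rule once per bit. The only cosmetic difference is that the paper peels off the low-order bits from the top down, so that after each step the remaining sum $r_0+\cdots+2^{j-1}r_{j-1}+2^j$ is already in consecutive-powers form, whereas you handle each $r_j$ independently by padding the missing $2^j$ slot with the Boolean constant $0$; both variants work.
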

\begin{proof}
Multiply the last equation by $r_{k}$ and replace $r_{k}^2$ by $r_{k}$.
We get $(r_0 + 2 r_1 + \ldots + 2^{k - 1} r_{k - 1} - 2^{k} ) r_{k} = 0,$
which has (the negation of) an instance of {\ebvp} in the parentheses (for $r'_i = 1-r_i$).
It remains to apply the {\ebvp} rule to prove that $r_{k} = 0$. After that we get
$$
r_0 + 2 r_1 + \ldots + 2^{k - 1} r_{k - 1} = 0.
$$
Again, multiply this by $r_{k - 1}$ and replace $r_{k - 1}^2$ by $r_{k - 1}$.
We get $(r_0 + 2 r_1 + \ldots + 2^{k - 2} r_{k - 2} + 2^{k - 1}) r_{k - 1} = 0$
with an instance of {\ebvp} inside. After applying the {\ebvp} rule we get that $r_{k - 1} = 0$. We can continue in the same way for $r_{k - 2}, \ldots, r_0$ getting
$$
r_0 = \ldots = r_{k}= 0.
$$
\end{proof}

\begin{lemma}[monotonicity of addition and multiplication]\label{sign of prod and sum}
For any two bit vectors $r_0, \ldots, r_{k - 1}, r_{k}$ and  $r_0', \ldots, r_{k - 1}', r_{k}'$, 
there is a $\poly(k)$-size {\extpc} derivation of
$$
\sgn(\prdv(\overline{r}, \overline{r}')) = 0 \text{ and } \sgn(\addv(\overline{r}, \overline{r}')) = 0,
$$
from
\begin{align*}
r_0^2 - r_0 = 0, \ldots, r_{k - 1}^2 - r_{k - 1} = 0, r_{k}^2 - r_{k} = 0, \\ {r_0'}^2 - r_0' = 0, \ldots, {r_{k - 1}'}^2 - r_{k - 1}' = 0, {r_{k}'}^2 - r_{k}' = 0,  \\   r_{k} = 0, \\    r_{k}' = 0.
\end{align*}
\end{lemma}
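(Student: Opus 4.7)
The plan is to unwind the recursive definitions of $\cari{i}$, $\addv$, $\iabsv$, and $\prdv$ (Definitions~\ref{def:carryadd}--\ref{def:prod}) and propagate the hypothesis $r_k = r'_k = 0$ through them, using the Boolean axioms for bit simplification. The two claims are treated in order, with the addition case serving as a subroutine for the multiplication case.

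\emph{Addition.} The sign bit of $\addv(\overline{r}, \overline{r}')$ is $\addi{k + 1}(\overline{r}, \overline{r}') = r_{k + 1} \oplus r'_{k + 1} \oplus \cari{k + 1}$, where by the padding rule $r_{k + 1} = r_k$ and $r'_{k + 1} = r'_k$, so the hypotheses immediately give $r_{k + 1} = 0 = r'_{k + 1}$. Substituting $r_k = r'_k = 0$ into the arithmetized carry
$\cari{k + 1} = 1 - (1 - r_k r'_k)\bigl(1 - (1 - (1 - r_k)(1 - r'_k))\,\cari{k}\bigr)$
and simplifying polynomially gives $\cari{k + 1} = 0$; crucially, the factor $1 - (1 - r_k)(1 - r'_k)$ collapses to $0$ and annihilates $\cari{k}$, so no recursion on $\cari{k}$ is needed. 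Plugging these three zeros into the arithmetized XOR gives $\addi{k + 1} = 0$, which is $\sgn(\addv(\overline{r}, \overline{r}'))$. Each step is a polynomial identity derivable in $\poly(k)$ {\extpc} steps.

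\emph{Multiplication.} From Definition~\ref{def:prod},
$\prdv(\overline{r}, \overline{r}') = \addv\bigl(\prdvp(\iabsv(\overline{r}), \iabsv(\overline{r}')) \oplus \vm,\, s\bigr)$
with $s = r_k \oplus r'_k$ and $\vm = \ve(s)$. Substituting $r_k = r'_k = 0$ yields $s = 0$ and $\vm$ the zero vector, reducing the outer computation to an $\addv$ whose two arguments are $\prdvp(\iabsv(\overline{r}), \iabsv(\overline{r}'))$ and the zero vector. By the addition case, it suffices to show the sign bit of $\prdvp(\iabsv(\overline{r}), \iabsv(\overline{r}'))$ is $0$. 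Since $r_k = 0$, the mask $\vm$ inside $\iabsv$ is also the zero vector, so $\iabsv(\overline{r}) = \addv(\overline{r}, \overline{0})$ and its top bit is $0$ by the addition case (similarly for $\overline{r}'$). Unfolding $\prdvp$ via Definition~\ref{def:prodp}, its output is a sequential chain of $\addv$'s over vectors $\overline{s}_0, \ldots, \overline{s}_k$, each with top bit $s_{i, r + i} = a_r \land b_i$; since the top bit of $\iabsv(\overline{r})$ is $0$, we get $s_{i, r + i} = 0$ for all $i$ by the same polynomial-annihilation argument as for $\cari{k+1}$. An induction along the chain, repeatedly invoking the addition case, propagates sign-bit-zero to every partial sum, and a final outer application of the addition case yields $\sgn(\prdv(\overline{r}, \overline{r}')) = 0$.

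\emph{Main obstacle.} The bulk of the technical labor is in the sequential-addition chain inside $\prdvp$: each $\overline{s}_i$ has length $r + i + 1$, so every invocation of the addition case involves different padding, and we must verify inductively that each accumulated partial sum has sign bit $0$ before proceeding. The entire argument must be executed syntactically through extension variables for $\cari{i}$, $\addv$, $\iabsv$ and $\prdvp$ (without appealing to semantic evaluation), but since each step is $\poly(k)$ and there are $O(k)$ additions in the chain, the overall derivation remains $\poly(k)$.
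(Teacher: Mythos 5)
Your proposal is correct and follows the expected route: the paper itself does not spell out a proof but simply cites Lemma~6.7 of the technical report of [AGHT20], noting that the derivation there is already an {\extpc} derivation, and your unwinding of $\cari{}$, $\addv$, $\iabsv$ and the $\prdvp$ addition chain (with the sign bits of both summands forcing the top carry and top XOR to vanish) is precisely the content of that cited argument. The one point worth keeping explicit, which you do flag, is that every step must go through the extension variables of the BIT encoding and that the padding in the unequal-length additions inside $\prdvp$ copies a sign bit already proved to be $0$.
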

\begin{proof}
See \cite{AGHT19_new} (Lemma 6.7 in the Technical Report version), as the derivation presented in that paper is literally in {\extpc}. 
\end{proof}


\begin{lemma}\label{lem:square}
1. For any vector of variables $r_0, \ldots, r_{k - 1}, r_k$,
there is a $\poly(k)$-size {\extpc} derivation of
$$
\sgn(\prdv(\overline{r}, \overline{r})) = 0
$$
from
$$
r_0^2 - r_0 = 0, \ldots, r_{k - 1}^2 - r_{k - 1} = 0, r_{k}^2 - r_{k} = 0.
$$
2. If additionally $\prdv(\overline{r}, \overline{r}) = \overline{0}$ is given,
there is a $\poly(k)$-size {\extpc} derivation of
$$
r_0 = 0, \ldots, r_k = 0. 
$$
\end{lemma}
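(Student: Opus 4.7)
The plan for Part~1 is to apply Lemma~\ref{sign of prod and sum} after simplifying $\prdv(\overline{r},\overline{r})$ to a form whose inputs have sign bit $0$. By Definition~\ref{def:prod}, $\prdv(\overline{r},\overline{r}) = \addv(\prdvp(\iabsv(\overline{r}),\iabsv(\overline{r})) \oplus \vm, s)$, where $s = r_k \oplus r_k$ and $\vm = \ve(s)$. First I would derive $s = r_k + r_k - 2r_k^2 = 0$ from the Boolean axiom $r_k^2 - r_k = 0$, hence $\vm = \overline{0}$, so the outer XOR and the final $\addv$ with a zero bit collapse to identities (poly-size polynomial manipulations). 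Next I would derive, by a structural induction on the $\iabsv$ circuit, that every bit of $\iabsv(\overline{r})$ satisfies the Boolean axiom (since $\wedge$, $\vee$, $\oplus$ and hence $\addv$ preserve Booleanness at $O(1)$ cost per gate), and I would unfold Definition~\ref{def:abs} to see that the sign bit of $\iabsv(\overline{r})$ is $0$. These are the hypotheses of Lemma~\ref{sign of prod and sum}, so applying it to $\iabsv(\overline{r})$ paired with itself gives $\sgn(\prdv(\iabsv(\overline{r}),\iabsv(\overline{r}))) = 0$; the same simplification as above (using $\iabsv(\iabsv(\overline{r})) = \iabsv(\overline{r})$, derivable once the inner sign bit is zero) shows this sign bit coincides with the one we want.

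For Part~2, suppose every bit of $\prdv(\overline{r},\overline{r})$ has been derived to be $0$. Using the same decomposition, together with the facts that $\addv(\va, 0)$ returns $\va$ bit-by-bit (since all carries vanish) and that XOR with $\overline{0}$ is the identity, I would first derive that every bit of $\prdvp(\iabsv(\overline{r}),\iabsv(\overline{r}))$ is $0$. Setting $\va := \iabsv(\overline{r})$ (Boolean in every position by the argument above), I would then derive $a_0 = 0, a_1 = 0, \ldots$ by induction on $i$. The $0$-th bit of $\prdvp(\va,\va)$ is syntactically $a_0 \wedge a_0$, which equals $a_0$ by the Boolean axiom, so $a_0 = 0$. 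For the induction step, assuming $a_0 = \ldots = a_{i-1} = 0$, each product $s_{jl} = a_{l-j} \wedge a_j$ vanishes unless both $j \ge i$ and $l - j \ge i$; among these, only $j = i$, $l = 2i$ contributes to position $2i$, and every bit below $2i$ of every $\overline{s}_j$ is $0$, so all carries up to position $2i$ vanish and the $(2i)$-th bit collapses to $a_i \wedge a_i = a_i$, yielding $a_i = 0$. Once $\iabsv(\overline{r}) = \overline{0}$ has been derived, I would unfold $\iabsv(\overline{r}) = \addv(\overline{r}, \ve(r_k)) \oplus \ve(r_k)$, rewrite the bit-wise equation as $r_i \oplus r_k \oplus \cari{i}(\overline{r}, \ve(r_k)) = r_k$, and extract $r_0, r_1, \ldots, r_k$ by an analogous bit-by-bit induction in which the carries $\cari{i}$ vanish once $r_0, \ldots, r_{i-1}$ have been shown to be zero.

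The main obstacle I expect is the bookkeeping in the Part~2 induction: at each step I must substitute the previously derived equations $a_0 = \ldots = a_{i-1} = 0$ into the recursive definitions of $\prdvp$ and the iterated $\addv$, and then argue via the carry recurrence that all carries below position $2i$ vanish, so that the syntactic expression for the $(2i)$-th bit truly collapses to $a_i$. As long as each inductive step is of $\poly(k)$ size and the intermediate Booleanness and carry-vanishing derivations are shared across the induction, the total derivation size remains $\poly(k)$. Part~1, by contrast, should be comparatively routine once the simplifications $s = 0$ and $\vm = \overline{0}$ are carried out, since from that point on it reduces cleanly to Lemma~\ref{sign of prod and sum} applied as a black box.
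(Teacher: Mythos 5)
Your proposal is correct and follows essentially the same route as the paper: derive $s=0$ and $\vm=\overline{0}$ to collapse $\prdv(\vr,\vr)$ to $\prdvp(\iabsv(\vr),\iabsv(\vr))$, conclude the sign bit is zero from the nonnegativity of the inputs, and for Part~2 run the same bit-by-bit induction extracting $a_i$ from position $2i$ of the product, followed by a second induction recovering $\vr=\overline{0}$ from $\iabsv(\vr)=\overline{0}$. The only cosmetic difference is that in Part~1 you invoke Lemma~\ref{sign of prod and sum} on $\prdv(\iabsv(\vr),\iabsv(\vr))$ via idempotence of $\iabsv$, whereas the paper reads the zero sign bit directly off the definition of $\prdvp$; both amount to the same use of the monotonicity lemma on the iterated $\addv$.
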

\begin{proof}
By the definition of $\prdv$, \[
\prdv(\vr,\vr) = \addv\left(\prdvp\left(\iabsv(\vr),\iabsv(\vr)\right)\oplus\vm,s\right),
\]
where $s=r_{k}\oplus r_{k}$ and $\vm=\ve(s)$. Thus we instantly derive that $s = 0$ and $\vm = \overline{0}$ and obtain 
$$
\prdv(\vr,\vr) = \prdvp\left(\iabsv(\vr),\iabsv(\vr)\right),
$$
which completes the proof of the first statement (by the definition of $\prdvp$).

Now we denote $\overline{r}' := \iabsv(\vr)$. We already know from the definition of {\iabsv} that the sign bit of $\vr'$ is equal to 0. Now we will derive that each bit $r_i$ is equal to zero by induction, starting from $r_0$. 

{\textbf{Base case:}} We have the equation $\prdvp(\vr', \vr') = \overline{0}$. Let us denote the vector $\prdvp(\vr', \vr')$ as $\overline{t}$.

Now recall the definition of $\prdvp$: we have $k + 1$ iterations $i = 0, \ldots,k$; the result of the $i$th iteration is defined as the $(k+ i + 1)$-length vector $\overline s_i=s_{i,k+i}s_{i,k+i-1}\cdots s_{i,0}$ where
\begin{align*}
s_{ij}&:=r_{j-i}'\land r_i', &\text{ for $i \le j\le k + i$},\\
s_{ij}&:=0 &\text{for $0\le j<i$.}
\end{align*}
Eventually, {\prdvp} is defined as
\[
\overline{t} := \addv\left(\overline s_{k},\addv\left(\overline s_{k - 1},\ldots,\addv\left(\overline s_1,\overline s_0\right)\right)\ldots\right).
\]
From this definition, it is immediate that $t_0 = r_0'$ since $s_{i, 0} = 0$ for $i > 0$ (which matches the intuition of the ``school'' multiplication procedure). So, we can easily derive that $r_0' = 0$. 

{\textbf{Induction step:}} Assume we already derived that $r_l' = 0, \ldots, r_0' = 0$. After substituting these values, the definition of $\bar s_i$ gives us immediately 
\begin{itemize}
    \item 
$\bar{s}_l = \bar{s}_{l - 1} = \ldots = \bar{s}_0 = 0$,
\item
$s_{i j} = 0$ for $0 \le j \le l$, $i > l$,
\item
$s_{i j} = 0$ for $l+ 1 \le i$ and $l + 1 \le j < 2l + 2$,
\item
thus we can conclude that $s_{i j} = 0$ for \emph{any} $i$ and $0 \le j < 2l + 2$.
\end{itemize}


Finally, for $j = 2l + 2$ we can derive that $s_{ij} = 0$ for all $i > l + 1$ because it is either defined to be 0 or $s_{i, j} = r_{j-i}'\land r_i'$ and $r_{j - i}' = 0$ was derived already (since $j - i \le l$). Also $\bar{s}_i=0$ for $i\le l$, so $s_{i, 2 l + 2} = 0$ for $i \neq l + 1$. Together with the fact that $s_{i j} = 0$ for any $j < 2l + 2$ we can derive that 
$$
t_{2l + 2} = s_{l + 1, 2l + 2}
$$
(we use here the definition of ADD, which is ``school'' addition, and we have just obtained that not only all the bits in the column $2l+2$ are zeroes, but also every bit in less significant columns is zero).

On the other hand, by definition
$
s_{l + 1, 2l + 2} = (r_{l + 1}' \land r_{l + 1}') = r_{l + 1}',
$
so we conclude that
$
t_{2l + 2} =  r_{l + 1}',
$
which gives us $r_{l + 1}' = 0$.\

Thus we have shown that $\iabsv(\vr) = \overline{0}$. Now using a simple induction argument again we can show that $\vr = \overline{0}$.
\end{proof}

\section{${\extpcZ} + {\ebvp}$ polynomially simulates ${\extpsZ}$}\label{sec:ps}
In this section we will show that ${\extpcZ} + {\ebvp}$ polynomially simulates ${\extpsZ}$. This will be done by gradually applying Lemma~\ref{lem:binary-value-lemma} to the circuit representation of the ${\extpsZ}$ derivation.

\begin{theorem}[the derivation theorem]\label{Main_BV_lemma}
Suppose we have a system of polynomial equations  $f_1 = 0, \ldots, f_k = 0$,
and that there is an ${\extpsZ}$ refutation $p_1\ge0,\ldots,p_m\ge0$ of
the corresponding system  $f_1 \ge 0, f_1 \le 0, \ldots, f_k \ge 0, f_k \le 0$. 

Consider its circuit representation according to Sect.~\ref{sec:circuit}. Denote the syntactic length of the circuit $P_1, \ldots, P_m$ as $t$. Then, in terms of Sect.~\ref{sec:circuit} there are $\poly(t)$-size $\extpc_\mathbb{Z}$ + {\ebvp} derivations of the facts
\begin{enumerate}
    \item $p_1 = \val(\bitv(P_1)), \ldots, p_m = \val(\bitv(P_m)).$
    \item Each \textbf{sign bit} in $\bitv(P_i)$ is equal to 0 (written in the form of polynomial equation $s_i = 0$ where $s_i$ is a variable, corresponding to the sign bit of $\bitv(P_i)$). 
\end{enumerate}
The axioms used in these derivations are the boolean axioms, the axioms defining extension variables, and (for the second statement) the input axioms.
\end{theorem}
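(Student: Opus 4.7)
The plan is to prove both statements simultaneously by induction on the step $l$ of the given $\extpsZ$ refutation, deriving the required facts in $\extpcZ+\ebvp$ from the Boolean axioms, the defining equations of the extension variables (both those coming from the circuit representation of the $\extpsZ$ proof and those introduced by the BIT construction of Sect.~\ref{sec:bit}), and the input axioms $f_j=0$.

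Statement 1 is essentially a composition of earlier work: Lemma~\ref{lem:circuit} gives a $\poly(t)$-size $\extpc$ derivation of $\pi_l=p_l$, where $\pi_l$ is the output variable of the equational representation of $P_l$, and Lemma~\ref{lem:binary-value-lemma} applied to that equational representation yields a $\poly(t)$-size derivation of $\pi_l=\val(\bitv(P_l))$. Chaining the two identities delivers $p_l=\val(\bitv(P_l))$ of the required size, without any case analysis on the $\extpsZ$ step.

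For statement 2 I would perform a case analysis on how $p_l$ appears in the $\extpsZ$ proof. As a preliminary fact derived once and for all, every BIT variable should be Boolean, which follows by a routine induction on the BIT construction since $\car,\add,\prdvp,\prdv$ are built from arithmetizations of $\land,\lor,\oplus$ that preserve Boolean-ness on Boolean inputs. For the zero-valued axioms (the input equations $\pm f_j$, the Boolean axioms $\pm(x_i^2-x_i)$, and the extension-variable definition equations) we have $p_l=0$ in $\extpc$, so statement 1 gives $\val(\bitv(P_l))=0$ and Lemma~\ref{lem:bits-of-zero} then forces every bit of $\bitv(P_l)$ to zero, in particular the sign bit. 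For the square axiom $p_l=z^2$ we have $P_l=Z\cdot Z$, so $\bitv(P_l)=\prdv(\bitv(Z),\bitv(Z))$ and Lemma~\ref{lem:square}(1) yields $\sgn(P_l)=0$ directly. For $p_l=x_i$ the BIT definition sets the sign bit of $P_l$ to the constant $0$, so nothing has to be derived; $p_l=1-x_i$ needs a short direct calculation on the simple circuit. For the inference rules $p_l=p_i+p_j$ and $p_l=p_i\cdot p_j$ we have $P_l=P_i\circ P_j$, so the induction hypothesis gives $\sgn(P_i)=\sgn(P_j)=0$ and Lemma~\ref{sign of prod and sum} concludes $\sgn(P_l)=0$.

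The main obstacle is the case of the axiom $p_l=1-x_i$: the bit vector $\bitv((-1)\cdot x_i)$ computed by $\prdv$ does not have a zero sign bit when $x_i=1$ (its sign bit equals $x_i$), so Lemma~\ref{sign of prod and sum} cannot be applied blindly at the outer $\addv$. I would get around this either by (a) using the short closed form of $\prdv$ and $\addv$ on very short bit vectors with the constant $-1$ and simplifying with $x_i^2=x_i$, or (b) observing that by statement~1 combined with the Boolean axiom we already have $\val(\bitv(P_l))^2=\val(\bitv(P_l))$, and then invoking a small auxiliary derivation (from the BIT construction together with Lemma~\ref{lem:bits-of-zero}) that a bit vector whose value $v$ satisfies $v^2=v$ must have sign bit zero. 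Either way the whole derivation has $\poly(t)$ size, since $t$ already bounds the syntactic length of every $P_l$ and hence the number and length of the introduced BIT variables, and the outer induction over the $m$ proof steps adds only a polynomial factor.
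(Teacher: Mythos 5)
Your proposal matches the paper's proof essentially step for step: statement 1 is obtained by chaining Lemma~\ref{lem:circuit} with Lemma~\ref{lem:binary-value-lemma}, and statement 2 by induction over the proof steps using Lemma~\ref{lem:bits-of-zero} for the zero-valued axioms, Lemma~\ref{lem:square} for squares, and Lemma~\ref{sign of prod and sum} for the addition/multiplication step, exactly as the paper does. The only divergence is the $1-x_i$ case, which the paper dismisses as ``easily seen from the construction of $\bitv$'' while you correctly flag the negative intermediate value $(-1)\cdot x_i$ and supply a constant-size workaround --- a point of extra care rather than a gap.
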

\begin{proof}
\begin{enumerate}
    \item  From Lemma~\ref{lem:binary-value-lemma} we know a short proof that the binary value of the BIT circuit BIT$(G)$ equals the variable corresponding to the output of the original circuit $G$ in the equational representation of $G$. By applying this lemma to all circuits appearing in the proof we get $\pi_i = \val(\bitv(P_i))$, where the variable $\pi_i$ corresponds to the output of $P_i$. It remains to prove the equation $\pi_i=p_i$, which is done by Lemma~\ref{lem:circuit}.




\item In order to prove that there are polynomial-size derivations of the facts that each sign bit in $\bitv(P_i)$ is equal to 0, we recall that previously proven lemmas give us three statements:
\begin{enumerate}
    \item\label{sign from val statement} If we have the equation $\val(\bitv(P_j)) = 0$, then Lemma~\ref{lem:bits-of-zero} provides a polynomial-size derivation of $\sgn(\bitv(P_j)) = 0$.
    \item\label{sum and prod statement} If we have equations $\sgn(\bitv(P_j)) = 0$ and $\sgn(\bitv(P_k)) = 0$, then Lemma~\ref{sign of prod and sum} provides a polynomial-size derivation of 
    $$
    \sgn(\prdv(\bitv(P_j), \bitv(P_k))) = 0 \text{ and } \sgn(\addv(\bitv(P_j),\bitv( P_k))) = 0.
    $$
    \item\label{square statement} For any variable $y_i$, Lemma~\ref{lem:square} provides a polynomial-size derivation of
    $$
    \sgn(\prdv(\bitv(Y_i), \bitv(Y_i))) = 0.
    $$
\end{enumerate}
We now proceed to proving the statement 2 by induction.

\textbf{Base case:} the base is one of the following cases:
\begin{itemize}
    \item $P_i$ is a definition of an ${\extpsZ}$ proof extension variable or a Boolean axiom. Then $P_i = 0$ (that is, it is a trivial circuit) by the construction of the circuit representation (cf. Lemma~\ref{lem:circuit}, second item in the proof). 
    \item $P_i$ is an input axiom. By the first statement we derive $\val(\bitv(P_i)) = 0$ and using \hyperref[sign from val statement]{statement (a)}, we can derive that  $\sgn(\bitv(P_i)) = 0$.
    \item $P_i$ is an input variable or its negation; then $\sgn(\bitv(P_i)) = 0$ is easily seen from the construction of $\bitv$.
    \item $P_i$ is a square (of a variable). Then \hyperref[square statement]{statement (c)} provides a polynomial-size derivation of $\sgn(\bitv(P_i)) = 0$.
\end{itemize}

\textbf{Induction step:} Suppose we have already proved that $\sgn(\bitv(P_j)) = 0$ for $j < k$, and $P_k$ is constructed using an operation $P_{k} = P_j \cdot P_l$ or $P_{k} = P_j + P_l$. Then
we can apply \hyperref[sum and prod statement]{statement (b)} and show that $\sgn(\bitv(y_{k + 1}'')) = 0$ with polynomial-size $\extpc$ derivation.\end{enumerate}
\end{proof}

\subsection{The simulation theorem}
\begin{definition}[Syntactic size of a refutation]
The syntactic size of an ${\extpsZ}$ refutation is the syntactic size of a corresponding circuit representation from Sect.~\ref{sec:circuit}.
\end{definition}

\begin{theorem}\label{th:ext-pc-sim}
Consider arbitrary system of polynomial equations $f_1 = 0, \ldots, f_k = 0$. Suppose there is an ${\extpsZ}$ refutation for the system $f_1 \ge 0, f_1 \le 0, \ldots, f_k \ge 0, f_k \le 0$ of syntactic size $S$. Then there is an $\extpc_{\mathbb{Z}}+\ebvp$ refutation for the system $f_1 = 0, \ldots, f_k = 0$ of size at most $\poly(S)$.
\end{theorem}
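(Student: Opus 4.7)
The result is essentially a corollary of the derivation theorem (Theorem~\ref{Main_BV_lemma}), followed by a single application of the {\ebvp} rule. I would begin with the given ${\extpsZ}$ refutation $p_1 \ge 0, \ldots, p_m \ge 0$ in which $p_m = -M$ for some positive integer $M$, form its circuit representation $P_1,\ldots,P_m$ as in Sect.~\ref{sec:circuit} (whose syntactic length $t$ is polynomial in $S$), and introduce into the $\extpcZ + \ebvp$ proof all the extension variables needed for the equational representation of the $P_i$'s together with the BIT encoding of Sect.~\ref{sec:bit}; the resulting set of definitions is of size $\poly(t)$.

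Applying Theorem~\ref{Main_BV_lemma}, in $\poly(t)$ steps I derive for every $i$ both $p_i = \val(\bitv(P_i))$ and $\sgn(\bitv(P_i)) = 0$. Specializing to $i = m$, writing $\bitv(P_m) = (b_0, \ldots, b_{k-1})$ with $b_{k-1}$ the sign bit, and combining the two facts with the definition of $\val$ by one linear combination yields the derivable equation
\[
M + \sum_{j=0}^{k-2} 2^j b_j \;=\; 0.
\]
Each $b_j$ is built by the BIT construction from the $\{0,1\}$-preserving arithmetizations of $\land$, $\lor$ and $\oplus$ applied to the original Boolean variables, so a routine polynomial-size induction on the BIT construction produces the Boolean axioms $b_j^2 - b_j = 0$. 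All preconditions of the ${\ebvp}$ rule are then in hand with $g = 1$ and linear form $M + \sum_{j=0}^{k-2} 2^j b_j$, so a single application of ${\ebvp}$ derives the nonzero constant $1$, closing the refutation.

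The genuine work --- translating between the circuit representation and its bit-vector arithmetic and propagating the sign-bit invariant through the refutation --- has already been absorbed into Theorem~\ref{Main_BV_lemma}, so the only remaining obstacle is bookkeeping: verifying that the syntactic length of the circuit representation, the BIT blow-up of Sect.~\ref{sec:bit}, and each lemma invoked in Theorem~\ref{Main_BV_lemma} contribute at most polynomial factors, so that the overall refutation has size $\poly(S)$. The minor asymmetry that the ${\extpsZ}$ refutation uses pairs $f_j \ge 0, f_j \le 0$ while the ${\extpc}$ refutation has single axioms $f_j = 0$ causes no trouble, since $f_j = 0$ is precisely what the combination of the two inequalities asserts and is readily used wherever Theorem~\ref{Main_BV_lemma} calls for an input axiom.
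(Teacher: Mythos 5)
Your proposal is correct and follows essentially the same route as the paper: invoke Theorem~\ref{Main_BV_lemma} to get $p_m = \val(\bitv(P_m)) = -M$ and the vanishing of the sign bit, rearrange into $M + \sum_j 2^j b_j = 0$, and close with one application of the {\ebvp} rule. Your explicit remark that the Boolean axioms $b_j^2 - b_j = 0$ for the bit variables must themselves be derived (as the {\ebvp} rule requires) is a detail the paper leaves implicit, but it does not change the argument.
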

\begin{proof}
We use the notation from the previous section.

Consider a size $S$ {\extps}-refutation $p_1 \ge 0, \ldots, -M = p_k \ge 0$ of the system $f_1 \ge 0, f_1 \le 0, \ldots, f_k \ge 0, f_k \le 0$. By \autoref{Main_BV_lemma}(1) there is a $\poly(S)$ derivation of the fact
that the value of the polynomial computed in the last line ($p_k\ge0$, which is $-M\ge0$) of the original semialgebraic proof is a negative integer 
$$
-M = p_k = \val(\bitv(P_k)). 
$$
On the other hand, by \autoref{Main_BV_lemma}(2) there is a $\poly(S)$ derivation of the fact that
$$
s = 0,
$$
where $s$ is a variable corresponding to the sign bit of $\bitv(P_k)$.
This means that we have an equation of the form 
$$
-M = b_0 + 2 b_1 + 4 b_2 + \ldots + 2^r b_r - 2^{r + 1} s
$$
where $b_0, \ldots, b_r, s$ are the variables corresponding to the bit representation of $\bitv(P_k)$.
From this we derive that 
$$
b_0 + 2 b_1 + 4 b_2 + \ldots + 2^r b_r + M = 0,
$$
which is exactly the case of {\ebvp}, so the contradiction follows in a single step.
(Note that another application of {\ebvp} is in Lemma~\ref{lem:bits-of-zero}.)
\end{proof}


\section{${\extpcZ} + {\ebvp}$ polynomially simulates  ${\extpcsurdz} + {\ebvp}$}\label{sec:sq}
In this section we show that {\ebvp} simulates the square root rule.

We will be using the following strategy for the simulation:
\begin{itemize}
    \item Suppose we want to derive $g = 0$ from $g^2 = 0$, for some polynomial $g$.
    \item We consider the bit representation $\bitv(G^2)$ of $g^2$.
    \item Lemma~\ref{lem:binary-value-lemma} provides a polynomial-size proof of
    $
    \val(\bitv(G^2)) = g^2,
    $
    thus we have 
    $
    \val(\bitv(G^2)) = 0.
    $
    \item From this, Lemma~\ref{lem:bits-of-zero} provides a polynomial-size proof of $\bitv(G^2) = \overline{0}$. Here we make use of {\ebvp}.
    \item Now Lemma~\ref{lem:square} provides a polynomial-size proof of $\bitv(G) = \overline{0}$.
    \item From this we can derive that $g  = \val(\bitv(G)) = 0$.
\end{itemize}

The formal application of this strategy is given by the following lemma.

\begin{lemma}\label{squre root simutaion lemma}
Assume that we have a polynomial $g \in \mathbb{Z}[x_1, \ldots, x_n, y_1, \ldots, y_m]$ where $x_1, \ldots, x_n$ are Boolean variables (that is, we have the equations $x_i^2 - x_i = 0$), and variables $y_1, \ldots, y_m$ are other variables introduced via the extension rule (which means that each $y_j = h_j(\vx, y_1, \ldots, y_{j - 1})$, where $h_j$ is a basic arithmetic operation or a constant). Suppose the syntactic length (cf Def.~\ref{def:circuitpoly} of the polynomial $g$ is $t$. Then there is a $\poly(t)$-size ${\extpcZ}+{\ebvp}$ derivation of the equation $g = 0$ from the equation $g^2 = 0$ (using the equations $x_i^2 - x_i = 0$ and $y_j - h_j(\vx, \vy) = 0$).
\end{lemma}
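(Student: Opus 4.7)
The plan is to execute the informal strategy laid out just before the lemma statement, instantiating each bullet with the circuit and BIT machinery of Sections~\ref{sec:circuit} and~\ref{sec:bit}.

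First, fix a circuit representation $G_1,\ldots,G_l$ of $g$ as in Definition~\ref{def:circuitpoly}, whose syntactic length equals $t$ by hypothesis, and let $\pi$ be the variable corresponding to the output gate $G_l$ in the equational representation. Lemma~\ref{note:circuit} provides a $\poly(t)$-size {\extpc} derivation of $g=\pi$ from the Boolean and extension-variable axioms. Multiplying $g-\pi=0$ first by $g$ and then by $\pi$ and using the hypothesis $g^2=0$ yields $\pi^2=0$ in a constant number of further steps.

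Next, extend the equational representation by the BIT construction of Section~\ref{sec:bit}, introducing the bit vector $\bitv(G)$ for the circuit $G$ and the bit vector $\bitv(G^2):=\prdv(\bitv(G),\bitv(G))$ for a one-gate extension computing the square. Lemma~\ref{lem:binary-value-lemma} then furnishes a $\poly(t)$-size derivation of $\val(\bitv(G^2)) = \pi^2$, which combined with $\pi^2=0$ gives $\val(\bitv(G^2))=0$. Lemma~\ref{lem:bits-of-zero}, invoking {\ebvp} together with the Boolean-ness of the entries of $\bitv(G^2)$ (derivable by induction on the BIT construction, since {\rm AND}, {\rm OR}, {\rm XOR} all produce Boolean outputs from Boolean inputs), then derives $\prdv(\bitv(G),\bitv(G))=\overline 0$ in polynomial size. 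Applying Lemma~\ref{lem:square}(2) to this equation yields, again in polynomial size, the equations asserting that every bit of $\bitv(G)$ is zero.

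Finally, Lemma~\ref{lem:binary-value-lemma} also gives $\val(\bitv(G))=\pi$, and substituting the freshly derived zero values for all bits of $\bitv(G)$ collapses this to $\pi=0$; combined with $g=\pi$ we conclude $g=0$. Each invoked lemma runs in $\poly(t)$ size, so the total derivation is polynomial in $t$. The main obstacle will be the bookkeeping point that the $\bitv(G)$ used inside the definition of $\bitv(G^2)$ must literally be the same vector of extension variables on which Lemma~\ref{lem:square}(2) is then applied (rather than a freshly introduced copy), and that the Boolean axioms for the BIT variables required by Lemmas~\ref{lem:bits-of-zero} and~\ref{lem:square} are indeed polynomially derivable from the recursive BIT definition; once these identifications are fixed, the proof is a straightforward chain of polynomial-size invocations of previously established lemmas.
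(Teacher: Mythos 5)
Your proposal is correct and follows essentially the same route as the paper's proof: derive $g=\pi_l$ via Lemma~\ref{note:circuit}, transfer $g^2=0$ to the equational variable of an added square gate, apply Lemma~\ref{lem:binary-value-lemma}, then Lemma~\ref{lem:bits-of-zero} (the {\ebvp} step), then Lemma~\ref{lem:square}(2), and finally read off $g=\val(\bitv(G_l))=0$. The only cosmetic difference is that you obtain $\pi_l^2=0$ by direct multiplication rather than via the extra gate's defining equation, which amounts to the same polynomial-size manipulation.
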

\begin{proof}
Consider the circuit representation of the polynomial $g$. We can now consider the BIT representation of this circuit,
and get (by Lemma~\ref{lem:binary-value-lemma}) a polynomial-size derivation of
$$
\pi_1 = \val(\bitv(G_1)), \ldots, \pi_l = \val(\bitv(G_l)).
$$
On the other hand, we can apply Lemma~\ref{note:circuit} to prove that
$$
g = \pi_l.
$$
Let us add one more gate $G_{l + 1}$ to the circuit:   $G_{l + 1} = G_l \cdot G_l$. The corresponding variable in the equational representation would be
$
\pi_{l + 1} = \pi_l \cdot \pi_l.
$
Then we can instantly derive from $g^2 = 0$ that 
$$
\pi_{l + 1} = g^2 = 0. 
$$
Thus, using the equation $\pi_{l + 1} = \val(\bitv(G_{l + 1}))$ we can derive that
$$
\val(\bitv(G_{l + 1})) = 0.
$$
Lemma~\ref{lem:bits-of-zero} (that uses {\ebvp}) allows us to derive
$$
\bitv(G_{l + 1}) = \overline{0}.
$$
Now using the fact that $\bitv(G_{l + 1}) = \prdv(\bitv(G_l), \bitv(G_l))$ and Lemma~\ref{lem:square}, we can derive that
$$
\bitv(G_{l}) = \overline{0}.
$$
Now, using the equation $g = \val(\bitv(G_{l}))$ we instantly get that $g = 0$.
\end{proof}

We can now state the simulation result.

\begin{theorem}\label{th:sq-root}
Consider arbitrary system of polynomial equations $f_1 = 0, \ldots, f_k = 0$. Suppose there is an $\extpcsurdz+\ebvp$ refutation for this system where the sum of the syntactic sizes of all polynomials in derivation is equal to $S$. Then there is an $\extpc_{\mathbb{Z}}+\ebvp$ refutation for the system $f_1 = 0, \ldots, f_k = 0$ of size at most $\poly(S)$.
\end{theorem}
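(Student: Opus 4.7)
The plan is to simulate the given $\extpcsurdz+\ebvp$ refutation one line at a time in $\extpc_\mathbb{Z}+\ebvp$. First I would observe that every derivation rule of $\extpcsurdz+\ebvp$ other than the square root rule is already a rule of $\extpc_\mathbb{Z}+\ebvp$: the linear combination and variable multiplication rules of Polynomial Calculus, the $\ebvp$ rule, and the extension-variable axioms carry over verbatim. Thus each non-square-root step of the original refutation is simply copied into the simulation, contributing at most its own syntactic size.

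The only genuinely new step is the square root rule. For each of its applications in the original refutation --- say, deriving $f = 0$ from a previously derived line $f^2 = 0$ --- I would invoke Lemma~\ref{squre root simutaion lemma} with $g := f$. The polynomial $f$ lies in $\mathbb{Z}[\vx, y_1, \ldots, y_m]$, where the $y_j$'s are the extension variables declared at the start of the $\extpcsurdz+\ebvp$ refutation via equations $y_j = h_j(\vx, y_1, \ldots, y_{j-1})$, and this is exactly the setting the lemma requires. Since the syntactic length of $f$ is bounded by $S$ (being one summand of the total syntactic size $S$), the lemma yields a $\poly(S)$-size $\extpc_\mathbb{Z}+\ebvp$ derivation of $f = 0$ from $f^2 = 0$, using only the Boolean axioms and the already-present extension-variable definitions. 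The lemma introduces further auxiliary extension variables (the BIT variables and those representing the circuit of $f$), but this is freely allowed in $\extpc_\mathbb{Z}+\ebvp$.

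Summing everything up: the number of lines in the original refutation is bounded by $S$, and each line is either copied verbatim or replaced by a $\poly(S)$-size block of $\extpc_\mathbb{Z}+\ebvp$ derivation. Hence the total size of the simulation is $S \cdot \poly(S) = \poly(S)$, yielding the claimed bound. The main obstacle --- simulating a single square root step in polynomial size using $\ebvp$ --- has already been absorbed into Lemma~\ref{squre root simutaion lemma}, so beyond the routine per-line bookkeeping above I would not expect any further difficulty.
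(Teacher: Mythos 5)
Your proposal is correct and follows essentially the same route as the paper: a line-by-line simulation that copies every non-square-root step verbatim and invokes Lemma~\ref{squre root simutaion lemma} to replace each application of the square root rule by a $\poly(S)$-size block. The paper's own proof is exactly this induction, with the same bookkeeping for the total size.
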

\begin{proof}
We proceed by induction. Assume that we constructed an $\extpc+\ebvp$ derivation of polynomials $p_1, \ldots, p_l$ that appeared in the original $\extpcsurd+\ebvp$ derivation. We now show how to derive the polynomial $p_{l + 1}$. If this polynomial is an axiom, or is derived by the extension rule, or is derived by the {\ebvp} rule, or is derived by addition or multiplication from previous polynomials, then we can derive it in {\extpc} using the same rule (note that the size of the derivation is always at most the syntactic length). If the polynomial was derived by the square root rule, then we can use Lemma~\ref{squre root simutaion lemma} to simulate this derivation.
\end{proof}


\section{{\ebvp} cannot be used to prove CNF lower bounds}\label{sec:cnf}
Exponential lower bounds on the size of proofs of {\ebvp} have been demonstrated for several proof systems including {\extpcsq} \cite{alekseev2020lower}. However, they have a caveat: {\ebvp} is not a translation of a Boolean formula in CNF. Is it still possible to use these bounds to prove an exponential lower bound for a formula in CNF? For example, one could provide a polynomial-size {\extpcsq} derivation of a translation of an unsatisfiable Boolean formula in CNF from {\ebvp}: together with the lower bound for {\ebvp}, this would prove a bound for a formula in CNF. One could even introduce extension variables in order to describe such a formula.

In this section we show that this is not possible: any {\extpcsq} derivation of an unsatisfiable CNF from ${\ebvp}_n$ (that is, from $\sum_{i=1}^{n} x_i2^{i-1} + M = 0$) should have exponential size in $n$. We start with proving a lower bound over the integers. Then we use this result to extend it to the rationals. The proof can be viewed as a generalization of the lower bound in \cite{alekseev2020lower}; however, the lower bound is proved not for the derivation of $M=0$, but for the derivation of an arbitrary unsatisfiable CNF, possibly in the extension variables.

\subsection{Lower bound over the integers}
Suppose we have derived some unsatisfiable formula in CNF from $\ebvp_n$ in {\extpcsqZ}. This means that we have derived polynomial equations of the form $C_1 \cdot p_1 = 0, \ldots, C_m \cdot p_m = 0$, where each $C_i$ is a nonzero integer constant and each $p_i$ is the translation of a Boolean clause. The translation has the following form:
$$
p_i = y_{j_1} \cdots y_{j_k} \cdot \neg{y}_{\ell_1} \cdots \neg{y}_{\ell_r},
$$
where each $y_j$ is a Boolean variable and $\neg{y}_{\ell}$ is a variable introduced via the extension rule $\neg{y}_{\ell} = 1 - y_{\ell}$. Note that each variable $y_j$ can be an extension variable, however, it is necessary that we should derive that $C_j' \cdot (y_j^2 - y_j) = 0$ for each $y_j$, where $C_j' \in \mathbb{Z} \backslash \{0\}$. We will fix those equations $C_j' \cdot (y_j^2 - y_j) = 0$ for later. 
Note that since we work over the integers, we cannot assume that all $C_i$'s and $C'_j$'s equal 1 (we cannot divide), though if we derive polynomials multiplied by nonzero constants, it may still help in proving a lower bound for a CNF.

We start with formally defining how a substitution into the input variables changes polynomials that use extension variables:
\begin{definition}
Suppose we have introduced variables $y_1, \ldots, y_m$ in an {\extpcsqZ} derivation as
$$
 y_1 = q_1(x_1, \ldots, x_n), y_2 = q_2(x_1, \ldots, x_n, y_1), \ldots, \\ y_m = q_m(x_1, \ldots, x_n, y_1, \ldots, y_{m - 1}).
$$
Then, for any variable $y_i$ and any vector of bit values $\{b_1, \ldots, b_n\} \in \{0, 1\}^n$ we can define substitution $y_i|_{x_1 = b_1, \ldots, x_n = b_n}$ in the following way:
\begin{itemize}
    \item $y_1|_{x_1 = b_1, \ldots, x_n = b_n} := q_1(b_1, \ldots, b_n)$.
    \item For $i > 1$ we define 
    $$
    y_i|_{x_1 = b_1, \ldots, x_n = b_n} := q_i(b_1, \ldots, b_n, y_1|_{x_1 = b_1, \ldots, x_n = b_n}, \ldots, y_{i - 1}|_{x_1 = b_1, \ldots, x_n = b_n}).
    $$
\end{itemize}
For any polynomial $f(x_1, \ldots, x_n, y_1, \ldots, y_m) \in \mathbb{Z}[\vx, \vy]$ we define $f|_{x_1 = b_1, \ldots, x_n = b_n}$ in the following way:
$$
f|_{x_1 = b_1, \ldots, x_n = b_n} = f(b_1, \ldots, b_n, y_1|_{x_1 = b_1, \ldots, x_n = b_n}, \ldots, y_m|_{x_1 = b_1, \ldots, x_n = b_n})
$$
\end{definition}

Before proving our lower bound, we observe a property of Boolean substitutions:
\begin{lemma}\label{lem:primediv}
Suppose we have an instance of {\ebvp} of the form $M + x_1 + 2 x_2 + \ldots + 2^{n - 1} x_n$. Consider any prime number $p < 2^n$ and the binary representation $b_1, \ldots, b_k$ of any number $0 \le t < 2^n$ such that  $t \equiv -M\ (\text{mod } p)$. Suppose we have an {\extpcsqZ} derivation of the polynomial equation $f = 0$ from $M + x_1 + 2 x_2 + \ldots + 2^{n - 1} x_n$ and the Boolean axioms $x^2_i-x_i=0$. Then the number 
$f|_{x_1 = b_1, \ldots, x_n = b_n}$ is divisible by $p$.
\end{lemma}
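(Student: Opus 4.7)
The plan is to proceed by induction on the length of the given $\extpcsqZ$ derivation, showing that for every polynomial $r_\ell$ appearing in it the integer $r_\ell|_{x_1=b_1,\ldots,x_n=b_n}$ is divisible by $p$. I will abbreviate this substitution by $\sigma$; by the preceding definition, $\sigma$ extends uniquely to every extension variable and hence to every polynomial in $\mathbb{Z}[\vx,\vy]$ appearing in the derivation, producing an integer. Applying the induction to the final line yields exactly the conclusion $p\mid f|_\sigma$.

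For the base cases, the eBVP axiom evaluates under $\sigma$ to $M+b_1+2b_2+\cdots+2^{n-1}b_n = M+t$, which is divisible by $p$ by the choice of $t\equiv -M\pmod p$. Each Boolean axiom $x_i^2-x_i$ evaluates to $b_i^2-b_i=0$ since $b_i\in\{0,1\}$. Each extension axiom $y_j-q_j(\vx,y_1,\ldots,y_{j-1})$ evaluates to $0$ directly from the definition of $y_j|_\sigma$. For the easy inductive steps, a linear combination $r_\ell=\alpha r_j+\beta r_k$ satisfies $r_\ell|_\sigma=\alpha\cdot r_j|_\sigma+\beta\cdot r_k|_\sigma$, and divisibility is preserved; the multiplication rule $r_\ell=z\cdot r_k$ gives $r_\ell|_\sigma=(z|_\sigma)(r_k|_\sigma)$, which is again divisible by $p$ because $r_k|_\sigma$ is, regardless of whether $z$ is an original or an extension variable.

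The main obstacle, and the only place the primality of $p$ is used, is the square root rule: if $r_\ell$ is introduced because a previously derived $r_k$ satisfies $r_\ell^2=r_k$ as polynomials, then after applying $\sigma$ we obtain the integer identity $(r_\ell|_\sigma)^2=r_k|_\sigma$. The inductive hypothesis gives $p\mid r_k|_\sigma$, so $p$ divides the integer square $(r_\ell|_\sigma)^2$, and since $p$ is prime this forces $p\mid r_\ell|_\sigma$. Without primality this implication fails, so this is exactly where the hypothesis $p$ prime is essential; once this case is handled the induction closes, and the lemma follows by specializing to $r_\ell=f$.
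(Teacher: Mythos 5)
Your proposal is correct and follows essentially the same route as the paper: induction over the derivation, with the base cases (the \ebvp{} instance evaluating to $M+t\equiv 0\pmod p$, the Boolean axioms, and the extension axioms all vanishing or being divisible by $p$ under the substitution), divisibility preserved under linear combination and multiplication by a variable, and primality of $p$ invoked exactly once, for the square root rule. Nothing is missing.
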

\begin{proof}
The proof of this statement is a straightforward induction. It is obvious that the integers $$
M + b_1 + 2 b_2 + \ldots + 2^{n - 1} b_n ,
\quad b_i^2 - b_i = 0 \text{ and }(y_i - q_i)|_{x_1 = b_1, \ldots, x_n = b_n} = 0
$$
are divisible by $p$. Now we will prove the induction step:
\begin{itemize}
    \item If we have any derivation of the form $f_l = \alpha f_j + \beta f_k$, where $\alpha, \beta \in \mathbb{Z}$, then $f_k|_{x_1 = b_1, \ldots, x_n = b_n}$ and $f_j|_{x_1 = b_1, \ldots, x_n = b_n}$ are divisible by $p$, so $f_l|_{x_1 = b_1, \ldots, x_n = b_n}$ is divisible by $p$.
    \item If $f_l = x_j f_k$ or $f_l = y_j f_k$, then \begin{align*}
        f_l|_{x_1 = b_1, \ldots, x_n = b_n} = b_j f_k|_{x_1 = b_1, \ldots, x_n = b_n} \text{ or } \\ f_l|_{x_1 = b_1, \ldots, x_n = b_n} = y_j|_{x_1 = b_1, \ldots, x_n = b_n} \cdot f_k|_{x_1 = b_1, \ldots, x_n = b_n},
    \end{align*}
    so $f_l|_{x_1 = b_1, \ldots, x_n = b_n}$ is divisible by $p$.
    \item $f_l^2 = f_k$, then since $p$ is prime and $f_k|_{x_1 = b_1, \ldots, x_n = b_n}$ is divisible by $p$, $f_l$ also should be divisible by $p$.
\end{itemize}
\end{proof}
Immediately we get the following corollary:
\begin{corollary}\label{cor:primediv}
Suppose we have an instance of {\ebvp} of the form $M + x_1 + 2 x_2 + \ldots + 2^{n - 1} x_n$. Consider any prime number $p < 2^n$ and the binary representation $b_1, \ldots, b_k$ of any number $0 \le t < 2^n$ such that  $t \equiv -M\ (\text{mod } p)$. Suppose we introduced extension variable $y_i$ for which we have an {\extpcsqZ} derivation of the polynomial equation $C' \cdot (y_i^2 - y_i) = 0$ from $M + x_1 + 2 x_2 + \ldots + 2^{n - 1} x_n$. Then, either the number $C'$ is divisible by $p$, or $y_i|_{x_1 = b_1, \ldots, x_n = b_n} \equiv 1\ (\text{mod } p)$, or $y_i|_{x_1 = b_1, \ldots, x_n = b_n} \equiv 0\ (\text{mod } p)$.
\end{corollary}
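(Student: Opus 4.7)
The plan is to derive this corollary as a direct consequence of Lemma~\ref{lem:primediv} together with the primality of $p$. Specifically, by hypothesis we have an {\extpcsqZ} derivation of the equation $C'\cdot(y_i^2 - y_i) = 0$ from the instance of {\ebvp} and the Boolean axioms, so I can apply Lemma~\ref{lem:primediv} with $f := C'\cdot(y_i^2 - y_i)$ and the chosen substitution $x_1 = b_1,\ldots,x_n = b_n$.

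Next, I would observe that the substitution operation is a homomorphism: it evaluates the extension variables recursively as integers and commutes with the arithmetic operations appearing in $f$. Therefore
\[
f|_{x_1 = b_1,\ldots,x_n = b_n} \;=\; C'\cdot\bigl(y_i|_{x_1 = b_1,\ldots,x_n = b_n}\bigr)\cdot\bigl(y_i|_{x_1 = b_1,\ldots,x_n = b_n} - 1\bigr),
\]
and by Lemma~\ref{lem:primediv} this product of three integers is divisible by $p$.

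Finally, since $p$ is prime, Euclid's lemma forces $p$ to divide at least one of these three factors. These three possibilities are exactly the three alternatives in the statement: either $p \mid C'$, or $y_i|_{x_1 = b_1,\ldots,x_n = b_n} \equiv 0 \pmod p$, or $y_i|_{x_1 = b_1,\ldots,x_n = b_n} - 1 \equiv 0 \pmod p$, i.e.\ $y_i|_{x_1 = b_1,\ldots,x_n = b_n} \equiv 1 \pmod p$.

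There is essentially no obstacle here; the corollary is a two-line argument once Lemma~\ref{lem:primediv} is in hand. The only thing worth being careful about is verifying that the factorization of $y_i^2 - y_i$ is preserved under the substitution operator as defined, which is immediate from its recursive definition since substitution returns an integer value for each $y_j|_{x_1 = b_1,\ldots,x_n = b_n}$ and integer arithmetic is a commutative ring.
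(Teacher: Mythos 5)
Your proof is correct and matches the paper's argument essentially verbatim: both apply Lemma~\ref{lem:primediv} to $C'\cdot(y_i^2-y_i)$, note that the substitution commutes with the ring operations, and then use primality of $p$ to conclude that $p$ divides one of the factors $C'$, $y_i|_{x_1=b_1,\ldots,x_n=b_n}$, or $y_i|_{x_1=b_1,\ldots,x_n=b_n}-1$. No issues.
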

\begin{proof}
Straightforward from Lemma~\ref{lem:primediv}. We know that $C' \cdot (y_i|_{x_1 = b_1, \ldots, x_n = b_n}^2 - y_i|_{x_1 = b_1, \ldots, x_n = b_n})$ is divisible by $p$. Then, either $C'$ or $(y_i|_{x_1 = b_1, \ldots, x_n = b_n}^2 - y_i|_{x_1 = b_1, \ldots, x_n = b_n})$ is divisible by $p$. If $(y_i|_{x_1 = b_1, \ldots, x_n = b_n}^2 - y_i|_{x_1 = b_1, \ldots, x_n = b_n})$ is divisible by $p$, then either $y_i|_{x_1 = b_1, \ldots, x_n = b_n} \equiv 1\ (\text{mod } p)$, or $y_i|_{x_1 = b_1, \ldots, x_n = b_n} \equiv 0\ (\text{mod } p)$.
\end{proof}
Now we are ready to prove an exponential lower bound over the integers:
\begin{theorem}\label{thm:lower_bound_integers}
Suppose we have an {\extpcsqZ} derivation of an unsatisfiable CNF from $M + x_1 + \ldots + 2^{n - 1}x_n=0$ and the Boolean axioms. Then at least one of the following three conditions holds:
\begin{itemize}
    \item The number of clauses in this CNF is at least $2^{n/3}$. 
    \item We have derived a polynomial equation $C' \cdot (y_j^2 - y_j) = 0$ and the constant $C'$ is divisible by at least $\Omega(2^{n/3})$ different prime numbers.
    \item There is a clause $C \cdot y_{j_1} \cdots y_{j_k} \cdot \neg{y}_{\ell_1} \cdots \neg{y}_{\ell_r}$ such that the constant $C$ is divisible by at least $\Omega(2^{n/3})$ different prime numbers.
\end{itemize}
\end{theorem}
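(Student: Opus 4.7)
The plan is a pigeonhole principle over the primes $p < 2^n$. For each such prime I will produce a Boolean input substitution that forces every derived polynomial to vanish modulo $p$ (via Lemma~\ref{lem:primediv}), and I will then show that $p$ must divide either a clause constant $C_i$ or a Boolean-axiom constant $C'_j$. Since there are $\pi(2^n) = \Theta(2^n/n)$ primes in this range and only $K$ clause constants together with $L$ Boolean-axiom constants, pigeonhole will force either $K$ to be exponentially large or some single constant to absorb $\Omega(2^{n/3})$ distinct prime divisors.

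For the key step, fix a prime $p < 2^n$, pick $t \in [0, 2^n)$ with $t \equiv -M \pmod{p}$, and let $\vb$ be its binary expansion. Corollary~\ref{cor:primediv} tells us that for each variable $y_j$ appearing in the CNF, either $p \mid C'_j$, or $y_j|_{x_1=b_1,\ldots,x_n=b_n} \equiv 0$ or $1 \pmod{p}$; in the latter case $(\neg y_j)|_{\vb} = 1 - y_j|_{\vb}$ is likewise $0$ or $1$ modulo $p$. Assume no $C'_j$ is divisible by $p$. Then the residues $y_j|_{\vb} \bmod p$ define a genuine Boolean assignment $\alpha$ to the variables of the CNF. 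Unsatisfiability of the CNF guarantees that $\alpha$ falsifies some clause $i$; since the translation $p_i$ is the product of literal-translations that all evaluate to $1$ exactly when the clause is falsified, we have $p_i|_{\vb} \equiv 1 \pmod{p}$. Applying Lemma~\ref{lem:primediv} to the derived equation $C_i \cdot p_i = 0$ then gives $p \mid C_i$. Either way, $p$ is charged to some constant of interest.

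To close, assume for contradiction that none of the three stated cases holds. Then $K < 2^{n/3}$, no $C_i$ has $\Omega(2^{n/3})$ distinct prime divisors, and no $C'_j$ does either. Counting incidences, the total number of (prime, charged constant) pairs is at most $(K+L) \cdot o(2^{n/3})$, while the previous step shows this total is at least $\pi(2^n) = \Theta(2^n/n)$, giving the required contradiction. The \emph{main delicacy} I expect is controlling $L$: if the CNF uses many more distinct variables than clauses, the Boolean-axiom constants alone might in principle cover many primes without any single $C'_j$ being highly composite. I plan to handle this by bounding $L$ by the total literal count of the CNF---each $y_j$ must appear in at least one clause, so $L$ is at most the sum of clause widths---and then choosing the constants hidden in the $\Omega(2^{n/3})$ bound so that the pigeonhole inequality still triggers case~(2) or case~(3) whenever $K < 2^{n/3}$. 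The rest of the argument is a clean application of the divisibility lemmas already established.
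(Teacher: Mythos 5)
Your overall strategy is exactly the paper's: for each prime $p<2^n$ choose $t\equiv -M\pmod p$, use Lemma~\ref{lem:primediv} and Corollary~\ref{cor:primediv} to charge $p$ either to some Boolean-axiom constant $C'_j$ or (via unsatisfiability forcing a falsified clause whose translation evaluates to $1$ modulo $p$) to some clause constant $C_i$, and then pigeonhole over the $\Theta(2^n/n)$ primes. That part of your argument is correct and matches the paper line for line.

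The one genuine gap is precisely at the point you flag as the ``main delicacy,'' and your proposed repair does not close it. You bound the number $L$ of charged Boolean-axiom constants by the total literal count of the CNF, i.e.\ by the sum of the clause widths. But clause widths are unbounded: a CNF with $K<2^{n/3}$ clauses may still mention, say, $2^{0.9n}$ distinct extension variables inside a single wide clause, in which case $K+L$ is far too large for the pigeonhole to force any single constant to absorb $\Omega(2^{n/3})$ distinct primes, and none of the three alternatives of the theorem is obtained. (Arguing that such a derivation is ``large anyway'' proves the subsequent size corollary but not the theorem as stated, which makes no reference to derivation size.) The paper instead uses that the number of \emph{variables} is smaller than the number of \emph{clauses}; as literally stated this holds only for minimally unsatisfiable CNFs (Tarsi's lemma), so the clean fix is to pass to a minimally unsatisfiable sub-CNF $F'$ of the derived CNF: $F'$ has $m'\le K$ clauses and fewer than $m'$ variables, the charging argument applies verbatim to the clauses and variables of $F'$ alone, and the pigeonhole then runs over fewer than $2^{n/3+1}$ constants, yielding the claimed $\Omega(2^{n/3})$ bound. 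With that substitution for your bound on $L$, your proof is complete.
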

\begin{proof}
Let $\mathcal{Y}$ be the set of variables occurring in our CNF. 

Consider the set $\mathcal{P}$ of all prime numbers from $\{1, 2, \ldots, 2^n - 1\}$. Now consider any prime number $p \in \mathcal{P}$. As in Lemma~\ref{lem:primediv}, we can take an arbitrary  $t \in \mathbb{Z}$, $0 \le t < 2^n$, such that $t \equiv -M \ (\text{mod } p)$. Consider the binary representation $b_1, \ldots, b_n$ of this integer $t$. Corollary~\ref{cor:primediv} says that for every $y_i \in \mathcal{Y}$ we have derived that $C'_i \cdot (y_i^2 - y_i) = 0$ and either $C'_i$ is divisible by $p$, or $y_i|_{x_1 = b_1, \ldots, x_n = b_n} \equiv 1\ (\text{mod } p)$, or $y_i|_{x_1 = b_1, \ldots, x_n = b_n} \equiv 0\ (\text{mod } p)$. We fix now this particular equation for $y_i$ in what follows.

Now suppose that for every $y_i \in \mathcal{Y}$, the constant $C_i'$ from equation  $C_i' \cdot (y_i^2 - y_i) = 0$ is not divisible by $p$. Then we know that every number $y_i|_{x_1 = b_1, \ldots, x_n = b_n}$ is Boolean modulo $p$. Thus every number $\neg{y_i}|_{x_1 = b_1, \ldots, x_n = b_n}$ is also Boolean modulo $p$ and 
$$
y_i|_{x_1 = b_1, \ldots, x_n = b_n} \equiv 1 - \neg{y_i}|_{x_1 = b_1, \ldots, x_n = b_n}\ (\text{mod }p).
$$
Then, since our CNF is \emph{unsatisfiable}, we know that there is a clause  $C \cdot y_{j_1} \cdots y_{j_k} \cdot \neg{y}_{\ell_1} \cdots \neg{y}_{\ell_r}$, such that 
$$
(y_{j_1} \cdots y_{j_k} \cdot \neg{y}_{\ell_1} \cdots \neg{y}_{\ell_r})|_{x_1 = b_1, \ldots, x_n = b_n} \equiv 1\ (\text{mod }p).
$$
On the other hand, from Lemma~\ref{lem:primediv} we know that 
$$
C \cdot (y_{j_1} \cdots y_{j_k} \cdot \neg{y}_{\ell_1} \cdots \neg{y}_{\ell_r})|_{x_1 = b_1, \ldots, x_n = b_n} \equiv 0\ (\text{mod }p).
$$
Therefore, $C$ is divisible by $p$.

Summarizing everything, we get that for every prime $p \in \mathcal{P}$ either we have derived a Boolean equation $C' \cdot (y^2 - y)$ where $C'$ is divisible by $p$, or there is a clause $C \cdot y_{j_1} \cdots y_{j_k} \cdot \neg{y}_{\ell_1} \cdots \neg{y}_{\ell_r}$ where the constant $C$ is divisible by $p$.

Now, if the number of clauses in our CNF is at least $2^{n/3}$, then the first condition of the theorem holds. Suppose we have derived an unsatisfiable CNF with less then $2^{n/3}$ clauses. Then we have less than $2^{n/3}$ different variables in our CNF since it is unsatisfiable. Then we have derived less than $2^{n/3}$ equations of the form $C_i' \cdot (y_i^2 - y_i)$ and less than $2^{n/3}$ clauses of the form $C \cdot (y_{j_1} \cdots y_{j_k} \cdot \neg{y}_{\ell_1} \cdots \neg{y}_{\ell_r})$.

We showed that for any prime $p\in \mathcal{P}$ there is either an equation $C_i' \cdot (y_i^2 - y_i)$ such that $C_i'$ is divisible by $p$ or a clause $C \cdot (y_{j_1} \cdots y_{j_k} \cdot \neg{y}_{\ell_1} \cdots \neg{y}_{\ell_r})$ such that $C$ is divisible by $p$. So, since the total number of those equations is less then $2^{n/3 + 1}$, there is a constant $C$ (maybe $C=C_i'$) from one of those equations that is divisible by at least $\frac{|\mathcal{P}|}{2^{n/3 + 1}}$ prime numbers.

We know that the size of the set  $\mathcal{P}$ is at least $C'' \cdot 2^n / n$ by the Prime Number Theorem for some constant $C''$. Thus the constant $C$ should be divisible by at least $C''\cdot \frac{2^n}{2^{n/3 + 1} \cdot n}$ prime numbers, which is sufficient to satisfy the second or the third condition of the theorem.
\end{proof}

\begin{corollary}
Any {\extpcsqZ} derivation of an unsatisfiable CNF in $n$ variables from $\ebvp_n$ requires size $\Omega(2^{n/3})$.
\end{corollary}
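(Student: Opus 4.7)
The plan is to derive the corollary as a direct consequence of Theorem~\ref{thm:lower_bound_integers}, translating each of its three alternatives into the stated size lower bound. A refutation (or here, derivation of an unsatisfiable CNF) has size at least the sum over all lines of the total bit length of coefficients; in particular, its size is bounded below both by the number of lines and by the bit size of any single derived polynomial.

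First I would dispose of the easy case. If the derivation satisfies condition~(1) of Theorem~\ref{thm:lower_bound_integers}, then the CNF contains at least $2^{n/3}$ clauses, each of which appears as its own polynomial equation in the derivation. Hence the number of lines is at least $2^{n/3}$, which forces the derivation size to be $\Omega(2^{n/3})$ regardless of coefficient sizes.

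Next I would handle conditions~(2) and~(3) uniformly. In either case, some derived polynomial contains a nonzero integer coefficient $C$ (either the multiplier of a Boolean axiom $C' \cdot (y_j^2-y_j)=0$, or the multiplier $C$ of a clause $y_{j_1}\cdots y_{j_k}\cdot \neg y_{\ell_1}\cdots \neg y_{\ell_r}$) such that $|C|$ is divisible by $k=\Omega(2^{n/3})$ distinct primes. Since the primes are pairwise distinct and each is at least $2$, we have $|C|\ge 2^k$, so the bit length $\lceil \log|C|\rceil + 1$ of this single coefficient is already $\Omega(2^{n/3})$. Because the size of a polynomial (as defined in Section~\ref{sec:prelim}) is the sum of the bit lengths of its coefficients, and the size of the derivation is the sum of the sizes of its lines, the derivation size is at least $\Omega(2^{n/3})$.

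Combining the three cases, every {\extpcsqZ} derivation of an unsatisfiable CNF from $\ebvp_n$ has size $\Omega(2^{n/3})$. I do not expect any genuine obstacle here: the entire nontrivial content has been absorbed into Theorem~\ref{thm:lower_bound_integers}, and the only step that requires a moment's thought is the elementary observation that divisibility by $k$ distinct primes forces $|C|\ge 2^k$ and hence coefficient bit size $\ge k$.
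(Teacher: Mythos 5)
Your proposal is correct and follows essentially the same route as the paper: split on whether the CNF has at least $2^{n/3}$ clauses, and otherwise invoke Theorem~\ref{thm:lower_bound_integers} to find a single constant divisible by $\Omega(2^{n/3})$ distinct primes, whose bit length alone forces the size bound. The only difference is that you spell out the elementary step $|C|\ge 2^k$ explicitly, which the paper leaves implicit.
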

\begin{proof}
If the number of clauses in this CNF is at least $2^{n/3}$, then our derivation already has size $\Omega(2^{n/3})$.

Otherwise, by Theorem~\ref{thm:lower_bound_integers} there is a constant $C$ in our derivation divisible by at least $\Omega\left(2^{n/3}\right)$ different prime numbers. Thus, the bit size of this integer should be $\Omega(2^{n/3})$.
\end{proof}

\subsection{Lower bound over the rationals}
In order to prove a lower bound over $\mathbb{Q}$, we need to convert an {\extpcsqQ} proof into an {\extpcsqZ} proof. We will use the following technical statement from \cite{alekseev2020lower}:

\begin{theorem}[\cite{alekseev2020lower}, Claim 12]\label{transaltion-from-z-to-q}
Suppose we have an ${\extpcsqQ}$ derivation $\{R_1, \ldots, R_t\}$ from some set of polynomials $\Gamma = \{f_1, \ldots, f_n\} \subset \mathbb{Z}[\bar x]$. Also, suppose $R_t \in \mathbb{Q}[\bar x]$, which means that $R_t$ does not depend on newly introduced variables.

Then there is an ${\extpcsqZ}$ derivation $\{R_1', \ldots, R_{t'}'\}$ from  $\Gamma$, where 
$$
R_{t'}' = \delta_1^{c_{1}} \cdots \delta_l^{c_{l}} \cdot  L_1^{c_{l + 1}} \cdots L_t^{c_{l + t}}  \cdot R_t
$$ and
\begin{itemize}
    \item $c_1, c_2, \ldots, c_{l + t}$ are some non-negative integers.  
    \item Each $L_i \in \mathbb{N}$ is the product of all denominators of coefficients of polynomial $R_i$.
    \item The set of constants $\{\delta_1, \delta_2, \ldots, \delta_l\} \subset \mathbb{N}$ is the set of all \textbf{denominators} of the constants in $\{\gamma_1, \gamma_2, \ldots, \gamma_l\}$, where  $\{\gamma_1, \gamma_2, \ldots, \gamma_l\} \subset \mathbb{Q}$ is the set of all constants $\alpha$ and $\beta$ occurring in linear combination steps in the proof. This means that some $R_j(\bar x, \bar y)$ was derived by using the linear combination rule with the constants $\alpha$ and $\beta$, or in other words, $R_j = \alpha R_i + \beta R_k$ for some previously derived polynomials $R_i$ and $R_k$.
\end{itemize}
\end{theorem}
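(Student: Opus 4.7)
The plan is to proceed by induction on the length of the rational derivation, maintaining the invariant that after processing step $j$ of the {\extpcsqQ} proof, the integer proof has derived a polynomial $N_j\cdot\tilde R_j$, where $\tilde R_j$ is obtained from $R_j$ by replacing each rational extension variable by an integer-valued counterpart (introduced by a denominator-cleared definition), and where $N_j\in\mathbb{N}$ is a product of $\delta_i$'s and $L_i$'s (denominators of linear-combination scalars and of polynomial coefficients, respectively). Since the hypothesis forces $R_t\in\mathbb{Q}[\bar x]$, no extension variables appear in $R_t$, so $\tilde R_t=R_t$ and the concluding line of the integer proof is exactly $N_t\cdot R_t$ of the required shape.

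In the inductive step, each rule of the rational proof is simulated as follows. Axioms from $\Gamma$ and Boolean axioms are already integer, giving $N_j=1$. An extension rule introducing $y_j=q_j$ with rational $q_j$ is replaced in the integer proof by the axiom with definition $L_j\cdot q_j$, threading $L_j$ into the multiplier bookkeeping for all subsequent lines that mention $y_j$. For a linear combination $R_j=\alpha R_i+\beta R_k$ with $\alpha=a/\delta_\alpha$ and $\beta=b/\delta_\beta$ in lowest terms, I combine the previously-derived $N_i\tilde R_i$ and $N_k\tilde R_k$ using the integer coefficients $a\delta_\beta N_k$ and $b\delta_\alpha N_i$; this yields $\delta_\alpha\delta_\beta N_iN_k\cdot\tilde R_j$, so $N_j=\delta_\alpha\delta_\beta N_iN_k$. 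Multiplication by a variable leaves the multiplier unchanged.

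The hard part is the square root rule: if the rational proof derives $R_j$ from $R_k=R_j^2$, the inductive hypothesis only supplies $N_k\tilde R_k=N_k\tilde R_j^2$, which is not itself a perfect square because of the leftover factor $N_k$. The remedy is to first multiply this polynomial by $N_k$ (using the linear combination rule with scalar $N_k$ and a zero contribution), producing $N_k^2\tilde R_j^2=(N_k\tilde R_j)^2$; the square root rule applied to this derived polynomial then yields $N_k\tilde R_j$, preserving the invariant with $N_j=N_k$. Combining all five cases, every $N_j$ is by construction a product of $\delta_i$'s and $L_i$'s with non-negative integer exponents, which is exactly the form claimed for $R_{t'}'$. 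The one place where care is required beyond this case analysis is to verify that the substituted polynomial $\tilde R_j$ is genuinely integer-coefficient at every step: this is where absorbing the $L_i$-factors from step (ii) into later $N_j$'s becomes essential, since the denominators introduced by rewriting an earlier extension variable $y_i$ as $y_i^{(I)}/L_i$ must be cleared before the integer proof can proceed.
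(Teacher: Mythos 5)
The paper does not actually prove this statement: it is imported verbatim as Claim~12 of \cite{alekseev2020lower}, so there is no in-paper argument to compare against. Your denominator-clearing induction is the natural (and, as far as I can tell, the intended) route, and you correctly isolate the only genuinely non-routine case: the square root rule, where the fix of first multiplying $N_k\tilde R_j^2$ by $N_k$ to obtain the perfect square $(N_k\tilde R_j)^2$ and only then extracting the root is exactly the right idea, and it is a legal ${\extpcsqZ}$ step (a linear combination with $\alpha=N_k$, $\beta=0$). The linear-combination bookkeeping $N_j=\delta_\alpha\delta_\beta N_iN_k$ also checks out and explains why only divisibility, not size, survives the translation (cf.\ the Note following the theorem).

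Two points need repair. First, your case ``multiplication by a variable leaves the multiplier unchanged'' is inconsistent with your own setup: once a rational extension definition $y_j=q_j$ has been replaced by the integer surrogate $y_j^{(I)}$ with $y_j^{(I)}=L_j'\,y_j$ semantically, the integer proof can only multiply by $y_j^{(I)}$, so the step $R_l=y_jR_k$ yields $y_j^{(I)}\cdot N_k\tilde R_k=L_j'\,N_k\,\tilde R_l$ and the multiplier must grow by $L_j'$. You concede this in your closing sentence, but the case analysis as written says the opposite; note also that $L_j'$ is (a divisor of, and can be taken equal to) the $L_i$ of the proof line $y_j-q_j$, which is what keeps the final multiplier of the advertised form $\delta_1^{c_1}\cdots\delta_l^{c_l}L_1^{c_{l+1}}\cdots L_t^{c_{l+t}}$. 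Second, in the square-root case you need to verify that $N_k\tilde R_j$ actually lies in $\mathbb{Z}[\bar x,\bar y]$ before the integer square root rule may output it; this follows from a one-line content argument (write $\tilde R_j=\frac{1}{d}g$ with $g$ primitive, so $\tilde R_j^2=\frac{1}{d^2}g^2$ with $g^2$ primitive by Gauss's lemma, whence $d^2\mid N_k$ and a fortiori $d\mid N_k$), but it should be said. With these two adjustments the argument is sound.
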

\begin{note}
Observe that the size of the derivation $\{R_1', \ldots, R_{t'}'\}$ can be exponentially larger then the size of the derivation $\{R_1, \ldots, R_{t}\}$. However, this fact does not affect our proof, because in the next theorem we are concerned with divisibility only.
\end{note}

Now we will use \autoref{transaltion-from-z-to-q} to prove a lower bound over the rationals. 

\begin{theorem}\label{lower bound q CNF}
Any {\extpcsqQ} derivation of an unsatisfiable CNF from $\ebvp_n$  requires size $\Omega(2^{n/3})$.
\end{theorem}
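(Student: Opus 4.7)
The plan is to reduce to the integer version (Theorem~\ref{thm:lower_bound_integers}) using the translation lemma of \cite{alekseev2020lower} (Theorem~\ref{transaltion-from-z-to-q} above). From a hypothetical size-$S$ {\extpcsqQ} derivation of an unsatisfiable CNF (together with the derived Boolean axioms $y_j^2 - y_j = 0$ for every extension variable appearing in it) from $\ebvp_n$, I would apply the translation separately with each clause $p_i$ as the target and with each Boolean axiom $y_j^2 - y_j$ as the target. This yields, for every such target, an {\extpcsqZ} derivation of $K_i \cdot p_i = 0$ or $K'_j \cdot (y_j^2 - y_j) = 0$, where every constant $K_i, K'_j$ is a product of powers of the denominators $\delta_1, \ldots, \delta_l$ of linear-combination coefficients used in the rational proof and of the denominator-LCMs $L_1, \ldots, L_t$ of its polynomials.

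The heart of the argument is a divisibility bound: although the exponents in these products may be huge (and the integer derivation can therefore be exponentially larger than $S$), any prime dividing some $K_i$ or $K'_j$ must already divide one of $\delta_1, \ldots, \delta_l, L_1, \ldots, L_t$. Since the total bit-size of these denominators is $O(S)$ --- each one is recorded somewhere in the rational proof --- the number of distinct primes that can appear in any $K_i$ or $K'_j$ is at most $O(S)$. Combining this with Theorem~\ref{thm:lower_bound_integers}, applied to the resulting integer derivation, gives the desired bound: if the CNF has at least $2^{n/3}$ clauses, then $S \geq 2^{n/3}$ already; otherwise the theorem produces some multiplicative constant divisible by $\Omega(2^{n/3})$ distinct primes, which together with the $O(S)$ upper bound forces $S = \Omega(2^{n/3})$.

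The one subtle point I expect to handle is that Theorem~\ref{transaltion-from-z-to-q} is stated with the precondition $R_t \in \mathbb{Q}[\bar x]$, whereas the clauses and the equations $y_j^2 - y_j$ typically involve extension variables. I expect the inductive proof of that theorem to go through unchanged for targets containing extension variables --- the translation is a purely syntactic clearing-of-denominators procedure, and the extension-variable defining axioms $y_j - q_j = 0$ already lie inside $\Gamma$ --- but if this requires more care, one can instead first rewrite every extension variable in the target by its full $\vx$-expansion before invoking the translation. This may blow up the proof length enormously, but it preserves the set of primes appearing in the multiplicative constants, which is all that the divisibility argument actually needs.
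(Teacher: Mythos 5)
Your proposal is correct and follows essentially the same route as the paper: clear denominators via Theorem~\ref{transaltion-from-z-to-q}, apply the integer lower bound of Theorem~\ref{thm:lower_bound_integers}, and observe that every prime dividing the resulting multiplicative constants already divides one of the $\delta_i$ or $L_j$, whose total bit-size is $O(S)$. The subtlety you flag about the precondition $R_t \in \mathbb{Q}[\bar x]$ is real and is in fact glossed over in the paper's own proof, which applies the translation directly to clause and Boolean-axiom targets containing extension variables; your observation that the denominator-clearing induction does not depend on this restriction is the right way to resolve it.
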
 
\begin{note}
Since division by integer numbers is allowed in {\extpcsqQ}, we can assume that the translation of the CNF has the following form: 
$$
p_i = y_{j_1} \cdots y_{j_k} \cdot \neg{y}_{\ell_1} \cdots \neg{y}_{\ell_r},
$$
and the translations of equations for Boolean variables has the form $y_i^2 - y_i = 0$.
\end{note}
\begin{proof}
If the number of clauses in this CNF is at least $2^{n/3}$, then our derivation already has size $\Omega(2^{n/3})$.

We can thus assume that the number of clauses is less than $2^{n/3}$. 

From \autoref{transaltion-from-z-to-q} we know that there is an {\extpcsqZ} derivation from $\ebvp_n$ where all the clauses have the following form:
$$
\delta_1^{c_{1}} \cdots \delta_l^{c_{l}} \cdot  L_1^{c_{l + 1}} \cdots L_t^{c_{l + t}}  \cdot y_{j_1} \cdots y_{j_k} \cdot \neg{y}_{\ell_1} \cdots \neg{y}_{\ell_r} = 0,
$$ 
and all the Boolean equations for the variables in those clauses also have the form
$$
\delta_1^{c_{1}} \cdots \delta_l^{c_{l}} \cdot  L_1^{c_{l + 1}} \cdots L_t^{c_{l + t}}  \cdot (y_i^2 - y_i) = 0.
$$
Then from  \autoref{thm:lower_bound_integers}  we know that for some clause or equation for Boolean variables $\delta_1^{c_{1}} \cdots \delta_l^{c_{l}} \cdot  L_1^{c_{l + 1}} \cdots L_t^{c_{l + t}}$ is divisible by at least $\Omega\left(2^{n/3}\right)$ different prime numbers.

Since $\delta_1, \ldots, \delta_l$, $L_1, \ldots, L_t$ are positive integers, we know that 
$
\delta_1 \cdots \delta_l \cdot  L_1 \cdots L_t
$
is divisible by at least $\Omega\left(2^{n/3}\right)$ different prime numbers. We also know that 
$$
\log \lceil \delta_1 \rceil + \cdots + \log \lceil \delta_l \rceil + \log \lceil L_1 \rceil  + \cdots +  \log \lceil L_t \rceil \le O(Size(S))
$$ 
because all constants $L_1, \ldots, L_t$ are products of denominators in the lines of our refutation $\{R_1, \ldots, R_t\}$ and all constants $\delta_1, \ldots, \delta_l$ are denominators of rationals in linear combinations used in our derivation. 

On the other hand, we know that for some constant $C''$ the following holds:
$$
\delta_1 \cdots \delta_l \cdot  L_1 \cdots L_t  \ge 2^{C'' \cdot 2^{n/3}}
$$
since our product is divisible by at least $\Omega\left( 2^{{n/3}} \right)$ different prime numbers. Therefore, $S \ge \Omega(2^{n/3})$.
\end{proof}

\section{Further research}\label{sec:oq}
A long-standing open question in semialgebraic proof complexity is to prove a superpolynomial lower bound for a rather week proof system (called {\LS} after Lov\'asz and Schrijver), namely for its most basic version \cite{Pud99}: consider only polynomials of degree at most two, express then as sums of monomials with coefficients written in binary, allow the addition and the multiplication by the input variable $x$ or its negation $1-x$ only. (That is, no arbitrary multiplication, no squares axioms ($f^2\ge0$), no extension variables.) Recently lower bounds on very strong proof systems have been proved for systems of polynomial equations (based on {\ebvp}) that do \emph{not} come from Boolean formulas. Does this generalization help to prove superpolynomial lower bounds for polynomial inequalities, for example, for {\LS}? 


We have shown a polynomial simulation of {\extps} proofs in {\extpc} augmented by the {\ebvp} rule, which was already known for stronger systems {\IPS} vs {\CPS} \cite{AGHT19_new}. How can we weaken the basic system so that the statement remains true? For example, following \cite{Bus87} we can simulate binary arithmetic in logarithmic depth (by formulas), which, unfortunately, gives only log${}^2n$ depth proofs. Is it possible to do better?


\section*{Acknowledgement}
We are grateful to Ilario Bonacina and Dima Grigoriev for fruitful discussions, and to Yuval Filmus for his detailed comments on an earlier draft of this paper.

\small
\bibliographystyle{alpha}

\normalsize

\end{document}